\theoremstyle{plain}
\newtheorem{theorem}{Theorem}[section]
\newtheorem{proposition}[theorem]{Proposition}
\newtheorem{lemma}[theorem]{Lemma}
\newtheorem{corollary}[theorem]{Corollary}
\theoremstyle{definition}
\newtheorem{definition}[theorem]{Definition}
\newtheorem{example}[theorem]{Example}
\newtheorem{remark}[theorem]{Remark}
\newtheorem{assumption}[theorem]{Assumption}
\def\bbbone{{\mathchoice {\rm 1\mskip-4mu l} {\rm 1\mskip-4mu l}
		{\rm 1\mskip-4.5mu l} {\rm 1\mskip-5mu l}}}
\def\one{\bbbone}
\begin{document}
\title{\sc Self-adjointness of unbounded time operators}
\author{Fumio Hiroshima\footnote{Faculty of Mathematics, Kyushu University} and Noriaki Teranishi\footnote{Faculty of Science, Department of Mathematics, Hokkaido University}}
\date{\today}
\maketitle
\begin{abstract}
Time operators for an abstract semi-bounded self-adjoint operator $H$ 
with purely discrete spectrum is considered. 
The existence of a bounded self-adjoint time operator $T$ for $H$ is known as Galapon time operator. 
In this paper, a self-adjoint but \emph{unbounded} time operator $T$ of $H$ is constructed. 
\end{abstract}

\section{Introduction}
We give the definition of time operators and conjugate operators. 

\begin{definition} 
Let $H$ be a self-adjoint operator on a Hilbert space $\mathcal H$ and $T$ an operator on $\mathcal H$.
If $H$ and $T$ satisfy the canonical commutation relation 
\begin{align*}
[H,T]=HT-TH=-i\one
\end{align*}
on $D_{H,T}\subset \mathrm{D} (HT)\cap \mathrm{D} (TH)$ but 
$D_{H,T}\neq\{0\}$, 
then $T$ is called a conjugate operator of $H$ and $D_{H,T}$ is called a CCR-domain. 
Here $\mathrm{D} (A)$ is the domain of the operator $A$.
If $T$ is a symmetric operator on $\mathcal H$, then $T$ is called a time operator of $H$. 
\end{definition}

Time operators and/or conjugate operators for $H$ are in general not unique. In the series of papers \cite{HT24a,HT24b} 
we construct time operators and/or conjugate operators for $1$D-harmonic oscillator. 
It is known that the so-called Galapon time operator $T_{\mathrm G}$ for the $1$D-harmonic oscillator is bounded self-adjoint operator and the CCR-domain is dense.

We introduce Galapon operator. 
A self-adjoint operaotr $H$ considering the Galapon operator usually imposes the following conditions.

\begin{assumption}\label{ass}
An operator $H$ on a separable Hilbert space $\mathcal H$ is positive, unbounded and self-adjoint. 
The spectrum $\sigma(H)$ of $H$ consists of only simple eigenvalues and $H^{-1}$ is Hilbert-Schmidt. 
\end{assumption}
Let $H$ be an operator which satisfies Assumtion \ref{ass}, $e_n$ an eigenvector of $H$ for an eigenvalue $E_n$ for $n\in\mathbb N$. 
Note that \[\sum_{n=0}^\infty \frac{1}{E_n^2}<\infty.\] 
Define the Galapon time operator $T_{\mathrm G}$ associated with the operator $H$ by 
\begin{align*}
&\mathrm{D} (T_{\mathrm G})={\rm LH}\{e_n\in\mathcal H \mid n\in\mathbb N \},\\
&T_{\mathrm G}\varphi=i\sum_{n=0}^\infty\left( \sum_{m\neq n}\frac{(e_m, \varphi)}{E_n-E_m}\right) e_n,\quad \varphi\in\mathrm{D} (T).
\end{align*}
Here, for a subset $\mathcal A$ of $\mathcal H$, $\mathrm{LH}\mathcal A$ means the linear hull of $\mathcal A$.

\begin{proposition}[\cite{gal02a,AM08}]
\label{gal}
Suppose that $H$ satisfies Assumption \ref{ass}. 
Then $T_{\mathrm G}$ is a densely defined time operator of $H$ with a CCR-domain 
${\rm LH}\{e_n-e_m\mid n,m\in\mathbb N \}$. 
\end{proposition}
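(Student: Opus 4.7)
The plan is to verify four items in turn: (i) $T_{\mathrm G}\varphi\in\mathcal H$ for every $\varphi\in\mathrm{D}(T_{\mathrm G})$; (ii) $\mathrm{D}(T_{\mathrm G})$ is dense in $\mathcal H$; (iii) $T_{\mathrm G}$ is symmetric; (iv) $e_n-e_m\in\mathrm{D}(HT_{\mathrm G})\cap\mathrm{D}(T_{\mathrm G}H)$ and $[H,T_{\mathrm G}](e_n-e_m)=-i(e_n-e_m)$ for all $n,m\in\mathbb N$, from which the CCR on the full linear hull follows by linearity.

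Density is immediate: a self-adjoint $H$ with purely discrete simple spectrum has its eigenvectors $\{e_n\}$ forming a complete orthonormal system, so $\mathrm{LH}\{e_n\}$ is dense in $\mathcal H$. For well-definedness I would write $\varphi=\sum_{j\in F}c_je_j$ with $F\subset\mathbb N$ finite; the coefficient of $e_n$ in $T_{\mathrm G}\varphi$ is $i\sum_{m\in F,\,m\neq n}c_m/(E_n-E_m)$, which is $O(1/E_n)$ for $n\notin F$, so square-summability of the coefficients follows from $\sum_n E_n^{-2}<\infty$, which is a consequence of $H^{-1}$ being Hilbert--Schmidt. For symmetry it suffices, by sesquilinearity, to check $(T_{\mathrm G}e_k,e_l)=(e_k,T_{\mathrm G}e_l)$ on the basis: from $T_{\mathrm G}e_k=i\sum_{n\neq k}(E_n-E_k)^{-1}e_n$ both sides evaluate to $-i/(E_l-E_k)$ when $k\neq l$ and vanish when $k=l$.

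The core of the proof is the CCR on $e_n-e_m$. Membership in $\mathrm{D}(T_{\mathrm G}H)$ is immediate since $H(e_n-e_m)=E_ne_n-E_me_m\in\mathrm{LH}\{e_k\}=\mathrm{D}(T_{\mathrm G})$. For $\mathrm{D}(HT_{\mathrm G})$ I would combine the two basic series to obtain
\[
T_{\mathrm G}(e_n-e_m)=-\frac{i(e_n+e_m)}{E_n-E_m}+i(E_n-E_m)\sum_{k\neq n,m}\frac{e_k}{(E_k-E_n)(E_k-E_m)},
\]
whose tail coefficients are $O(1/E_k^{2})$, so that after multiplication by $E_k$ they still lie in $\ell^2$. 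A coefficient-by-coefficient comparison of $HT_{\mathrm G}(e_n-e_m)$ and $T_{\mathrm G}H(e_n-e_m)$ then shows that the infinite tails agree (both give numerator $E_k(E_n-E_m)$ over $(E_k-E_n)(E_k-E_m)$) and therefore cancel in the commutator, whereas the boundary coefficients at $k=n$ and $k=m$ combine to produce exactly $-ie_n+ie_m=-i(e_n-e_m)$.

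The main obstacle is precisely this cancellation. The operators $HT_{\mathrm G}$ and $T_{\mathrm G}H$ are not individually defined on a single basis vector $e_n$, because the coefficients of $T_{\mathrm G}e_n=i\sum_{k\neq n}(E_k-E_n)^{-1}e_n$ decay only like $1/E_k$, giving $\sum E_k^{2}/(E_k-E_n)^{2}=\infty$, so $T_{\mathrm G}e_n\notin\mathrm{D}(H)$. Only on the difference $e_n-e_m$ does the leading $1/E_k$ behaviour cancel, leaving an $O(1/E_k^{2})$ tail on which $H$ can safely act. Arranging the computation so that this cancellation is visible at every step, rather than interchanging limits illegitimately, is the delicate point; everything else reduces to a routine orthonormal-basis calculation.
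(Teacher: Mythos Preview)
Your argument is correct. The paper itself does not supply a proof of this proposition; it is stated with a citation to \cite{gal02a,AM08} and taken for granted, so there is nothing in the paper to compare against. What you have written is exactly the standard direct verification from those references: use $\sum_n E_n^{-2}<\infty$ to see that $T_{\mathrm G}$ lands in $\mathcal H$ on finite linear combinations, check symmetry on basis pairs, and then exploit the identity
\[
T_{\mathrm G}(e_n-e_m)=-\frac{i(e_n+e_m)}{E_n-E_m}+i(E_n-E_m)\sum_{k\neq n,m}\frac{e_k}{(E_k-E_n)(E_k-E_m)}
\]
so that the tail decays like $E_k^{-2}$, putting $T_{\mathrm G}(e_n-e_m)$ in $\mathrm D(H)$, after which the infinite tails in $HT_{\mathrm G}(e_n-e_m)$ and $T_{\mathrm G}H(e_n-e_m)$ coincide and the boundary terms give $-i(e_n-e_m)$. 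Your remark that $T_{\mathrm G}e_n\notin\mathrm D(H)$ for a single basis vector (because $E_k/(E_k-E_n)\to1$) is the right explanation for why the CCR-domain is only the span of differences.

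One small slip to fix before submitting: in the line ``$T_{\mathrm G}e_n=i\sum_{k\neq n}(E_k-E_n)^{-1}e_n$'' the final $e_n$ should be $e_k$.
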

It is established in \cite[Theorem 4.5]{AM08}
that 
if 
\begin{align}\label{enm}
E_n-E_m\geq C(n^\lambda-m^\lambda)
\end{align}
for some constants $C>0$ and $\lambda>1$, 
then $T_{\mathrm G}$ is a bounded time operator of $H$, and hence it is self-adjoint. 
In particular, if
\[E_n=a n^\lambda+b,\quad n\in \mathbb N,\quad \lambda>1\]
with some constants $a,b>0$, then 
$T_{\mathrm G}$ is a bounded self-adjoint time operator. 

Let 
\[E_n=a n^\lambda+b,\quad n\in\mathbb N,\quad 1/2<\lambda<1\]
with constants $a,b>0$. 
Then it can be shown that 
$T_{\mathrm G}$ is unbounded time operator. 
The self-adjointness of $T_{\mathrm G}$ is however unknown. 
It is also pointed out in \cite[Remark 4.7]{ara20} that 
no examples of {\it unbounded} self-adjoint time operator of the form $T_{\mathrm G}$ is constructed.

The purpose of this paper is to construct 
unbounded self-adjoint time operators 
for some abstract self-adjoint operator $H$. 
We organize this paper as follows. 
In Section 2 we unitarily transform $T_{\mathrm G}$ to an operator 
$T_f$ on $\ell^2(\mathbb N)$. 
Section 3 is devoted to constructing 
unbounded self-adjoint time operators. 
The main results are stated in 
Theorems \ref{sa_T3} and \ref{sa_T5}. 

\section{Galapon time operator on $\bm{\ell^2(\mathbb N)}$}
In this paper, the investigation of time operators is carried out on 
$\ell^2(\mathbb N)$ instead of $\mathcal H$. 
Then we shall show first of all 
that $T_{\mathrm G}$ is unitarily equivalent to an operator $T_ f$ on 
$\ell^2(\mathbb N)$. 
Denote by $\ell^2(\mathbb N)$ the set of square summable functions on $\mathbb N $ and 
let $\xi_n\in\ell^2(\mathbb N)$ be the function on $\mathbb N$ defined by
\begin{align*}
\xi_n(m)=\delta_{nm},\quad m\in\mathbb N,
\end{align*}
where $\delta_{nm}$ is the Kronecker delta function. 
We write $\ell^2_\mathrm{fin}(\mathbb N)$ for the set of $\varphi\in\ell^2(\mathbb N)$ which has a finite support, i.e., there exists $m\in\mathbb N $ and $(c_n)_{n=0}^m\in\mathbb C^{m+1}$ such that $\varphi$ is represented by $\sum_{n=0}^m c_n\xi_n$. 
Note that $\ell^2_\mathrm{fin}(\mathbb N)$ is dense in $\ell^2(\mathbb N)$. 
Let $L$ be the left shift operator on $\ell^2(\mathbb N)$ and $L^\ast$ the adjoint operator of $L$;
\begin{align*}
L^\ast\xi_n=\xi_{n+1},\quad n\in\mathbb N.
\end{align*}
Let $N$ be the number operator on $\ell^2(\mathbb N)$. Then 
$N\xi_n=n\xi_n$ for $n\in\mathbb N$. 
It is well known that $N$ is a self-adjoint operator, $\ell^2_\mathrm{fin}(\mathbb N)$ is a core for $N$ and $N$ satisfies 
$[N,L]=-L$ and $[N,L^\ast]=L^\ast$ on $\ell^2_\mathrm{fin}(\mathbb N)$. 

We introduce notations $\mathcal K$ and $\mathcal K^{-}$. 
\begin{definition}[$\mathcal K$ and $\mathcal K^-$]\label{A1}
Let us denote by $\mathcal K$ the set of all real valued funtions on $\mathbb N$ which satisfy the following conditions:
\begin{enumerate}
\item [\rm (1)] $ f(0)>0$,
\item [\rm (2)] $ f(n)< f(n+1)$ for all $n\in\mathbb N $.
\end{enumerate}
Write $\mathcal K^{-}=\bigl\{f\in\mathcal K\ \big|\ 1/f\in\ell^2(\mathbb N)\bigr\}$.
\end{definition}

To define $T_ f$ for $ f\in\mathcal K$, we 
set 
\begin{equation*}
\Delta_k(f,n)=f(n+k)-f(n).
\end{equation*}
\begin{lemma}\label{dom_fin}
Let $f\in \mathcal K$. Then $\ell^2_\mathrm{fin}(\mathbb N)\subset\mathrm{D} (\Delta_k(f,N)^{-1})$ for all natural number $k\geq 1$. 
\end{lemma}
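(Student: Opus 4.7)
The plan is to exploit the fact that $\Delta_k(f,N)$ is a function of the self-adjoint number operator $N$, hence a diagonal multiplication operator in the orthonormal basis $\{\xi_n\}_{n\in\mathbb N}$, and then to verify directly that every basis vector lies in the domain of its inverse, so that the inclusion extends by linearity to all of $\ell^2_{\mathrm{fin}}(\mathbb N)$.

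First I would apply the Borel functional calculus to the self-adjoint number operator $N$ via the real-valued function $n\mapsto f(n+k)-f(n)$ to produce the self-adjoint diagonal operator $\Delta_k(f,N)$, characterized by $\Delta_k(f,N)\xi_n=\Delta_k(f,n)\xi_n$ for every $n\in\mathbb N$. Condition (2) of Definition \ref{A1} gives $f(n+k)>f(n)$ for every $n\in\mathbb N$ and every $k\geq 1$, so each eigenvalue $\Delta_k(f,n)$ is strictly positive. Consequently $\Delta_k(f,N)$ is injective and $\Delta_k(f,N)^{-1}$ is the diagonal operator acting as multiplication by $1/\Delta_k(f,n)$ on $\xi_n$, with natural domain $\{\varphi=\sum_n c_n\xi_n:\sum_n |c_n|^2/\Delta_k(f,n)^2<\infty\}$.

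The conclusion is then immediate. For every basis vector $\xi_n$ the identity $\xi_n=\Delta_k(f,N)\bigl(\Delta_k(f,n)^{-1}\xi_n\bigr)$ shows that $\xi_n$ lies in the range of $\Delta_k(f,N)$, equivalently in $\mathrm{D}(\Delta_k(f,N)^{-1})$, with $\Delta_k(f,N)^{-1}\xi_n=\xi_n/\Delta_k(f,n)\in\ell^2(\mathbb N)$. Since the domain of any operator is a linear subspace and $\ell^2_{\mathrm{fin}}(\mathbb N)=\mathrm{LH}\{\xi_n\mid n\in\mathbb N\}$, the claimed inclusion follows. I do not anticipate any real obstacle here: the argument never moves beyond one basis vector at a time, so no convergence of infinite sums needs to be checked; the only ingredient doing actual work is the strict monotonicity of $f$, which guarantees nonvanishing of the diagonal entries.
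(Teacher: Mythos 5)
Your argument is correct and is essentially the paper's own proof: both use that $\Delta_k(f,N)$ is diagonal in the basis $\{\xi_n\}$ with strictly positive eigenvalues $\Delta_k(f,n)=f(n+k)-f(n)$ (by strict monotonicity of $f$), so each $\xi_n$ lies in the range of $\Delta_k(f,N)$, i.e.\ in $\mathrm{D}(\Delta_k(f,N)^{-1})$, and the inclusion extends by linearity. Your write-up merely spells out the functional-calculus and range-equals-domain details that the paper leaves implicit.
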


\begin{proof} Since $f$ is strictly increasing, $\Delta_k(f,N)$ is injective. 
Clearly $\ell^2_\mathrm{fin}(\mathbb N)\subset\mathrm{D} (\Delta_k(f,N))$ and $\xi_n$ is an eigenvector of $\Delta_k(f,N)$:
\begin{align*}
\Delta_k(f,N)\xi_n=\Delta_k(f,n)\xi_n.
\end{align*}
This implies that $\ell^2_\mathrm{fin}(\mathbb N)\subset\mathrm{D} (\Delta_k(f,N)^{-1})$.
\end{proof}

\begin{remark}
Note that $\inf_{n\in\mathbb N }\Delta_k(f,n)>0$ if and only if 
$\Delta_k(f,N)^{-1}$ is a bounded operator.
\end{remark}

\begin{definition}
Let $f\in \mathcal K$. 
We define operators $T_{f,m}$ and $T_ f$ on 
$\ell^2(\mathbb N)$ by
\begin{align*}
&T_{f,m}=i\sum_{k=1}^m \left( L^\ast{}^k\Delta_k(f,N)^{-1}- \Delta_k(f,N)^{-1}L^k\right),\\
&\mathrm{D} (T_ f)=\left\{\varphi\in\bigcap_{m\geq1}\mathrm{D} \left(T_{f,m}\right)\ \middle|\ \lim_{m\to\infty}T_{f,m}\varphi\text{ exists in }\ell^2(\mathbb N)\right\},\\
&T_ f\varphi=\lim_{m\to\infty}T_{f,m}\varphi,\quad \varphi\in\mathrm{D} (T_ f).
\end{align*}
\end{definition}

\begin{lemma}\label{dom_Tf}
Suppose that $f\in \mathcal K^{-}$. 
Then $\ell^2_\mathrm{fin}(\mathbb N)\subset\mathrm{D} \left(T_ f\right)$.
\end{lemma}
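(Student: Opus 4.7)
The plan is to reduce everything to the basis vectors $\xi_n$ using linearity, then to prove $T_{f,m}\xi_n$ converges in $\ell^2(\mathbb N)$ as $m\to\infty$. By Lemma \ref{dom_fin} each $\xi_n$ lies in $\mathrm{D}(\Delta_k(f,N)^{-1})$ for every $k\geq 1$, and since $L^{\ast k}\xi_n = \xi_{n+k}$, $L^k\xi_n = \xi_{n-k}$ (with the convention that $\xi_{n-k}=0$ for $k>n$), we get $\xi_n\in\mathrm{D}(T_{f,m})$ for every $m$, together with the explicit formula
\begin{equation*}
T_{f,m}\xi_n = i\sum_{k=1}^{m}\frac{\xi_{n+k}}{f(n+k)-f(n)} - i\sum_{k=1}^{\min(m,n)}\frac{\xi_{n-k}}{f(n)-f(n-k)}.
\end{equation*}

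The second sum stabilizes as soon as $m\geq n$, so it converges trivially. The crux is therefore the convergence in $\ell^2(\mathbb N)$ of the first sum. Because $\{\xi_{n+k}\}_{k\geq 1}$ is orthonormal, this reduces to showing
\begin{equation*}
S(n) := \sum_{k=1}^\infty \frac{1}{\bigl(f(n+k)-f(n)\bigr)^2}<\infty.
\end{equation*}

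This is the main obstacle, and it is where the assumption $f\in\mathcal K^{-}$, i.e.\ $1/f\in\ell^2(\mathbb N)$, enters. My approach is a dichotomy: since $f$ is strictly increasing and $1/f\in\ell^2(\mathbb N)$ forces $f(m)\to\infty$, there is a smallest index $k_0=k_0(n)$ such that $f(n+k)\geq 2 f(n)$ for all $k\geq k_0$. The finitely many terms with $k<k_0$ contribute a finite amount because $f(n+k)-f(n)>0$. For $k\geq k_0$ one has $f(n+k)-f(n)\geq f(n+k)/2$, hence
\begin{equation*}
\sum_{k\geq k_0}\frac{1}{(f(n+k)-f(n))^2}\leq 4\sum_{k\geq k_0}\frac{1}{f(n+k)^2}\leq 4\,\|1/f\|_{\ell^2}^2<\infty.
\end{equation*}
This establishes $S(n)<\infty$, so $T_{f,m}\xi_n$ is Cauchy in $\ell^2(\mathbb N)$ and therefore converges. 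By linearity of $T_{f,m}$ the same holds for every $\varphi\in\ell^2_\mathrm{fin}(\mathbb N)$, showing $\ell^2_\mathrm{fin}(\mathbb N)\subset\mathrm{D}(T_f)$.
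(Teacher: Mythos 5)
Your proof is correct and takes essentially the same route as the paper: reduce to the basis vectors $\xi_n$ and show $\sum_{k}\bigl(f(n+k)-f(n)\bigr)^{-2}<\infty$ by comparison with $\sum_k f(n+k)^{-2}$, which converges because $1/f\in\ell^2(\mathbb N)$. The only (cosmetic) difference is that the paper uses the uniform bound $f(n+k)-f(n)\geq\bigl(1-f(n)/f(n+1)\bigr)f(n+k)$ for all $k\geq1$, whereas you split off finitely many small-$k$ terms and use $f(n+k)\geq 2f(n)$ on the tail.
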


\begin{proof}
It is sufficient to show that $\lim_{m\to\infty}T_{f,m}\xi_n$ exists for all $n\in\mathbb N $. 
For any $n\leq m_1\leq m_2$, 
\begin{align*}
\left\|\left(T_{f, m_2}-T_{f, m_1}\right)\xi_n\right\|^2
&=\left\|\sum_{k=m_1+1}^{m_2}\left(L^\ast{}^k\Delta_k\left(f, N\right)^{-1}-\Delta_k\left(f, N\right)^{-1}L^k\right)\xi_n\right\|^2\\
&=\sum_{k=m_1+1}^{m_2} \frac{1}{(f(n+k)- f(n))^2}\\
&\leq\left(1-\frac{f(n)}{f(n+1)}\right)^{-2}\sum_{k=m_1+1}^{m_2}\frac{1}{f(n+k)^2}\to 0
\end{align*}
as $m_1,m_2\to\infty$. 
Hence $\{T_{f,m}\xi_n\}_{m\in\mathbb N }$ is a Cauchy sequence. 
Therefore $\lim_{m\to\infty}T_{f,m}\xi_n$ exists and $\xi_n\in\mathrm{D} (T_ f)$.
\end{proof}
From Lemma \ref{dom_Tf}, we see that $T_ f$ is a densely defined symmetric operator. 
The relationship between $T_ f$ and $T_{\mathrm G}$ is given by the following theorem.
\begin{theorem}\label{UT}
Suppose that $H$ satisfies Assumption \ref{ass}. 
Then there exists a unitary operator $U\colon\mathcal H\to\ell^2(\mathbb N)$ 
and a function $ f\in\mathcal K^{-}$ such that 
$ f(N)=UHU^\ast$ and $T_ f$ is unitary equivalent to $T_{\mathrm G}$ on $\ell^2_\mathrm{fin}(\mathbb N)$, i.e., 
\[UT_{\mathrm G} U^\ast=T_ f\quad \mbox{ on } 
\ell^2_\mathrm{fin}(\mathbb N).\]
\end{theorem}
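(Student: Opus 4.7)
The plan is to diagonalize $H$ via its orthonormal eigenbasis, transport $H$ and $T_{\mathrm G}$ to $\ell^2(\mathbb N)$ through the canonical unitary sending eigenvectors to the standard basis, and then verify by direct computation on $\xi_n$ that $UT_{\mathrm G}U^*$ coincides with $T_f$.

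First, under Assumption \ref{ass} the spectrum of $H$ consists of simple eigenvalues, and since $H^{-1}$ is Hilbert--Schmidt (hence compact and bounded) these eigenvalues are strictly positive and tend to infinity. I would enumerate them in strictly increasing order as $0<E_0<E_1<\cdots$, let $\{e_n\}$ be the associated orthonormal basis of eigenvectors, define $U\colon\mathcal H\to\ell^2(\mathbb N)$ by $Ue_n=\xi_n$ (extended by linearity and continuity), and set $f(n)=E_n$. Then $f\in\mathcal K^-$ is immediate: $f(0)=E_0>0$, $f$ is strictly increasing by construction, and $\sum_n f(n)^{-2}<\infty$ is the Hilbert--Schmidt condition. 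Since $UHU^*\xi_n=f(n)\xi_n=f(N)\xi_n$ on the common core $\ell^2_{\mathrm{fin}}(\mathbb N)$, self-adjointness gives $UHU^*=f(N)$.

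Next I would verify $UT_{\mathrm G}U^*\xi_n=T_f\xi_n$ for every $n\in\mathbb N$; linearity then extends the identity to all of $\ell^2_{\mathrm{fin}}(\mathbb N)$. Using $(e_m,e_n)=\delta_{mn}$, the defining double sum for $T_{\mathrm G}e_n$ collapses to $T_{\mathrm G}e_n=i\sum_{k\neq n}(E_k-E_n)^{-1}e_k$, which I would split into the forward piece $k=n+j$, $j\ge1$, and the (finite) backward piece $k=n-j$, $1\le j\le n$. On the other side, by Lemma \ref{dom_fin} the operators $T_{f,m}$ act explicitly on $\xi_n$ as $L^{*k}\Delta_k(f,N)^{-1}\xi_n=(f(n+k)-f(n))^{-1}\xi_{n+k}$ and $\Delta_k(f,N)^{-1}L^k\xi_n=(f(n)-f(n-k))^{-1}\xi_{n-k}$ when $n\ge k$ and zero otherwise. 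Letting $m\to\infty$, the backward sum stabilizes at $m\ge n$ and the forward sum converges in $\ell^2(\mathbb N)$ by Lemma \ref{dom_Tf}. Matching the two resulting expressions term by term under $U$ yields the claim.

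I expect the main obstacle to be essentially bookkeeping: one has to reconcile the single unordered series defining $T_{\mathrm G}\varphi$ with the symmetric partial sums $T_{f,m}$, which is why the decomposition into forward and backward parts at $k=n$ is the natural device. The convergence of the forward piece is exactly the estimate already used in the proof of Lemma \ref{dom_Tf}, so no additional analytic input beyond the increasing enumeration of the eigenvalues is required.
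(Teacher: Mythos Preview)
Your proposal is correct and follows essentially the same route as the paper: define $U$ by $Ue_n=\xi_n$, set $f(n)=E_n$, check $f\in\mathcal K^-$, and verify the intertwining relation by splitting the defining sum for $T_{\mathrm G}$ into forward and backward pieces that match the shift structure of $T_{f,m}$. The paper performs the computation for a general $\varphi\in\mathrm D(T_{\mathrm G})$ rather than for each basis vector $\xi_n$, but since $\mathrm D(T_{\mathrm G})=\mathrm{LH}\{e_n\}$ this difference is purely cosmetic.
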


\begin{proof} 
Let 
$ f\colon\mathbb N \to\mathbb R$ be a function such that $ f(n)=E_n$. 
 Then 
$ f\in \mathcal K^-$. 
Let $U\colon \mathcal H\to\ell^2(\mathbb N)$ be the unitary operator defined by
$Ue_n=\xi_n$ for any $n\in\mathbb N $. 
For arbitrary $\varphi\in\mathrm{D} (T)$, we see that 
\begin{align*}
UT_{\mathrm G}\varphi
&=i\sum_{n=0}^\infty\left( \sum_{m<n}\frac{\left( \xi_m, U\varphi\right) }{E_n-E_m}+\sum_{m>n}\frac{\left( \xi_m, U\varphi\right) }{E_n-E_m}\right) \xi_n\\
&=i\sum_{n=0}^\infty\left( \sum_{m<n}\frac{\left( L^{n-m}\xi_n, \varphi\right) }{E_n-E_m}+\sum_{m>n}\frac{\left( L^\ast{}^{m-n}\xi_n, U\varphi\right) }{E_n-E_m}\right) \xi_n\\
&=i\sum_{n=0}^\infty\left( \sum_{k=1}^\infty\frac{\left( L^k \xi_n, U\varphi\right) }{E_n-E_{n-k}}-\sum_{k=1}^\infty\frac{\left( L^\ast{}^k \xi_n, U\varphi\right) }{E_{n+k}-E_n}\right) \xi_n.
\end{align*}
Since $ f(N)\xi_n=E_n\xi_n$, it follows that
\begin{align*}
(E_{n}-E_{n-k})^{-1}L^k\xi_n=\Delta_k(f,N)^{-1}L^k\xi_n
\end{align*}
and
\begin{align*}
(E_{n+k}-E_n)^{-1}\xi_n=\Delta_k(f,N)^{-1}\xi_n.
\end{align*}
From Lemma \ref{dom_Tf}, we see that $\xi_n, U\varphi\in\mathrm{D} (T_ f)$. 
Thus 
\begin{align*}
UT_{\mathrm G}\varphi &=i\sum_{n=0}^\infty\left( 
\sum_{k=1}^\infty\left( 
\Delta_k(f,N)^{-1}L^k-L^\ast{}^k\Delta_k(f,N)^{-1}\right) \xi_n, U\varphi\right)\xi_n\\
&=\sum_{n=0}^\infty
\left( \xi_n, i\sum_{k=1}^\infty\left( L^\ast{}^k\Delta_k(f,N)^{-1}- \Delta_k(f,N)^{-1}L^k\right) U\varphi\right) \xi_n.
\end{align*}
This shows that 
$UT_{\mathrm G}\varphi=T_ f U\varphi$ for any $\varphi\in\mathrm{D} (T_{\mathrm G})=U^\ast\ell^2_\mathrm{fin}(\mathbb N)$. 
Then the theorem is proven. 
\end{proof}
\begin{corollary}\label{UTc}
For all $f\in\mathcal K^{-}$, 
the operator $T_f$ is a time operator of $f(N)$ with a CCR domain 
$(\one-L^\ast)\ell^2_\mathrm{fin}(\mathbb N)$.
\end{corollary}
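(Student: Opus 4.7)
The plan is to specialize the results already proven in the excerpt to $\mathcal H=\ell^2(\mathbb N)$ with $H=f(N)$, in which case no non-trivial unitary change of basis is required. First I would check that $f(N)$ satisfies Assumption \ref{ass}: the conditions $f(0)>0$ and strict monotonicity built into Definition \ref{A1} make $f(N)$ positive self-adjoint with simple eigenvalues $\{f(n)\}_{n\in\mathbb N}$, while $f\in\mathcal K^{-}$ forces $\sum_n f(n)^{-2}<\infty$, and hence $f(n)\to\infty$ (so $f(N)$ is unbounded) and $f(N)^{-1}$ is Hilbert-Schmidt.

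Having secured the assumption, I would apply Proposition \ref{gal} to $H=f(N)$ with eigenvectors $e_n=\xi_n$: this produces a Galapon operator $T_{\mathrm G}$ that is densely defined, symmetric, and satisfies $[f(N),T_{\mathrm G}]=-i\one$ on the CCR domain $\mathrm{LH}\{\xi_n-\xi_m\mid n,m\in\mathbb N\}$. Then Theorem \ref{UT} applied with $U=\one$ identifies $T_{\mathrm G}$ with $T_f$ on $\ell^2_{\mathrm{fin}}(\mathbb N)$, so $T_f$ automatically inherits symmetry together with the CCR on $\mathrm{LH}\{\xi_n-\xi_m\}$.

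The only remaining task is to rewrite this CCR domain as $(\one-L^\ast)\ell^2_{\mathrm{fin}}(\mathbb N)$. One inclusion is immediate since $(\one-L^\ast)\xi_n=\xi_n-\xi_{n+1}\in\mathrm{LH}\{\xi_n-\xi_m\}$, and the reverse follows by telescoping: for $n<m$,
\[
\xi_n-\xi_m=\sum_{k=n}^{m-1}(\xi_k-\xi_{k+1})=(\one-L^\ast)\sum_{k=n}^{m-1}\xi_k.
\]
I do not expect any genuine obstacle here; the corollary is essentially the $U=\one$ case of Theorem \ref{UT} combined with Proposition \ref{gal}, plus this elementary telescoping identification of the two descriptions of the CCR domain.
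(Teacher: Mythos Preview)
Your proposal is correct and follows essentially the same route as the paper: verify that $f(N)$ satisfies Assumption~\ref{ass}, then invoke Theorem~\ref{UT} (together with Proposition~\ref{gal}) in the trivial case $\mathcal H=\ell^2(\mathbb N)$, $e_n=\xi_n$, $U=\one$. Your version is in fact more complete than the paper's, which does not spell out the telescoping argument identifying $\mathrm{LH}\{\xi_n-\xi_m\}$ with $(\one-L^\ast)\ell^2_{\mathrm{fin}}(\mathbb N)$.
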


\begin{proof} 
By the definition of $\mathcal K^-$, 
the operator $f(N)$ is a positive and unbounded self-adjoint operator, 
and 
$\sigma(f(N))$ consists of only simple eigenvalues and $f(N)^{-1}$ is Hilbert-Schmidt. 
Thus $T_ f$ is a time operator of $ f(N)$ by Theorem \ref{UT}.
\end{proof}
By Theorem \ref{UT} and Corollary \ref{UTc}, 
the set $\{T_f\mid f\in\mathcal K^{-}\}$ includes Galapon time operators $T_{\mathrm G}$. So in what follows we consider 
time operator $T_f$.

\section{Self-adjointness of time operators}
\subsection{Bounded cases}
Let us recall the case where the operator $T_f$ is bounded.  
\begin{lemma}
\label{bd_T}
Let $f\in \mathcal K$. 
Suppose that $0\not\in\sigma(\Delta_k(f,N))$ for all $k\geq1$ and 
\[\sum_{k\geq1}\left\|\Delta_k(f,N)^{-1}\right\|<\infty.\] 
Then the operator $T_ f$ is bounded.
In particular, 
$T_ f$ is a self-adjoint operator. 
\end{lemma}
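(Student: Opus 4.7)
The plan is to upgrade the strong convergence used in defining $T_f$ to operator-norm convergence of the partial sums $T_{f,m}$, using the summability hypothesis, and to read off self-adjointness directly from that.

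First I would note that the hypothesis $0\not\in\sigma(\Delta_k(f,N))$ guarantees that each $\Delta_k(f,N)^{-1}$ extends to a bounded self-adjoint operator on all of $\ell^2(\mathbb{N})$, being multiplication by the bounded real sequence $1/\Delta_k(f,n)$ with $\|\Delta_k(f,N)^{-1}\|=1/\inf_n\Delta_k(f,n)$. Since $L$ and $L^\ast$ are contractions, this yields the operator-norm bounds
\[
\|L^\ast{}^k\Delta_k(f,N)^{-1}\|\leq\|\Delta_k(f,N)^{-1}\|,\qquad \|\Delta_k(f,N)^{-1}L^k\|\leq\|\Delta_k(f,N)^{-1}\|,
\]
and in particular each $T_{f,m}$ is a bounded operator defined on all of $\ell^2(\mathbb{N})$.

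Next I would establish a Cauchy estimate in operator norm: for $m_1<m_2$,
\[
\|T_{f,m_2}-T_{f,m_1}\|\leq 2\sum_{k=m_1+1}^{m_2}\|\Delta_k(f,N)^{-1}\|,
\]
which tends to $0$ by the summability assumption. Hence $(T_{f,m})_{m}$ is Cauchy in $B(\ell^2(\mathbb{N}))$ and converges in norm to some bounded operator $\tilde T$ on $\ell^2(\mathbb{N})$. Since norm convergence implies pointwise convergence for every $\varphi\in\ell^2(\mathbb{N})$, I would conclude that $\mathrm{D}(T_f)=\ell^2(\mathbb{N})$ and $T_f=\tilde T$, so $T_f$ is bounded on the whole space.

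For self-adjointness, I would set $A_m=\sum_{k=1}^m L^\ast{}^k\Delta_k(f,N)^{-1}$; the self-adjointness of each $\Delta_k(f,N)^{-1}$ gives $A_m^\ast=\sum_{k=1}^m\Delta_k(f,N)^{-1}L^k$, so $T_{f,m}=i(A_m-A_m^\ast)$ is bounded self-adjoint. Self-adjointness is preserved under operator-norm limits of bounded operators, so $T_f=\tilde T$ is self-adjoint. No serious obstacle is expected here: the entire content of the lemma is that the summability hypothesis promotes the conditionally convergent series in the definition of $T_f$ to one that converges absolutely in the operator norm, after which boundedness, full domain, and self-adjointness all fall out together.
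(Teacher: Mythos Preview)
Your proposal is correct and follows essentially the same approach as the paper: both arguments rest on the same Cauchy estimate $\|T_{f,m_2}-T_{f,m_1}\|\leq 2\sum_{k=m_1+1}^{m_2}\|\Delta_k(f,N)^{-1}\|$, with the paper phrasing it pointwise (for each $\varphi$) and you phrasing it in operator norm, which is equivalent since the bound is uniform. Your explicit self-adjointness argument via the norm limit of the self-adjoint $T_{f,m}$ is a small but welcome addition; the paper leaves this step implicit.
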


\begin{proof}
 For any $\varphi\in\ell^2(\mathbb N)$ and $1\leq m_1<m_2$,
\begin{align*}
\left\|\left(T_{f,m_2}-T_{f,m_1}\right)\varphi\right\|&\leq\sum_{k=m_1+1}^{m_2}\left(\left\|L^\ast{}^k\Delta_k(f,N)^{-1}\right\|+\left\|\Delta_k(f,N)^{-1}L^k\right\|\right)\|\varphi\|\\
&\leq2\|\varphi\|\!\!\!\!\sum_{k=m_1+1}^{m_2}\!\!\! \left\|\Delta_k(f,N)^{-1}\right\|.
\end{align*}
This shows that $\{T_{f,m}\varphi\}_{m\in\mathbb N }$ is a Cauchy sequence. Therefore $\mathrm{D} (T_ f)=\ell^2(\mathbb N)$ and $T_ f$ is bounded.
\end{proof}
A similar result to Lemma \ref{bd_T} is obtained 
in \cite[Theorem 4.5]{AM08}. 
\begin{example}
Let $\lambda>1$ and $ f(x)=x^\lambda+1$. 
Then $f\in \mathcal K^-$. 
Since $\Delta_k(f,n)\geq\Delta_k(f, 0)=k^\lambda$, we have 
\begin{align*}
\sum_{k\geq1}\left\|\Delta_k(f,N)^{-1}\right\|\leq \sum_{k\geq1}k^{-\lambda}<\infty.
\end{align*}
Therefore $T_ f$ is bounded self-adjoint time operator of $ f(N)$.
\end{example}

\subsection{Unbounded cases}
Next proposition is a sufficient condition for $T_ f$ to be 
unbounded. 
\begin{proposition}\label{unb}
Suppose that $f\in \mathcal K^-$ and 
$0\in\sigma(\Delta_1(f,N))$. Then 
$T_ f$ is unbounded.
\end{proposition}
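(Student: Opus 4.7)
The plan is to exhibit a sequence of unit vectors on which $T_f$ blows up, by applying $T_f$ directly to the basis vectors $\xi_n$ and exploiting that $0\in\sigma(\Delta_1(f,N))$ forces $f(n+1)-f(n)$ to be arbitrarily small along some subsequence.

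First I would compute $T_f\xi_n$ explicitly. Since $\Delta_k(f,N)^{-1}\xi_m=(f(m+k)-f(m))^{-1}\xi_m$, $L^{*k}\xi_n=\xi_{n+k}$, and $L^k\xi_n=\xi_{n-k}$ for $k\leq n$ (and $0$ otherwise), one gets
\[
T_{f,m}\xi_n=i\sum_{k=1}^{m}\frac{\xi_{n+k}}{f(n+k)-f(n)}-i\sum_{k=1}^{\min(m,n)}\frac{\xi_{n-k}}{f(n)-f(n-k)}.
\]
The vectors $\{\xi_{n\pm k}\}$ are mutually orthogonal unit vectors, so by Pythagoras
\[
\|T_{f,m}\xi_n\|^{2}=\sum_{k=1}^{m}\frac{1}{(f(n+k)-f(n))^{2}}+\sum_{k=1}^{\min(m,n)}\frac{1}{(f(n)-f(n-k))^{2}}.
\]
Since $\xi_n\in\mathrm{D}(T_f)$ by Lemma \ref{dom_Tf}, letting $m\to\infty$ and keeping only the $k=1$ term from the first sum yields the crude but sufficient bound
\[
\|T_f\xi_n\|\geq\frac{1}{f(n+1)-f(n)}=\frac{1}{\Delta_1(f,n)}.
\]

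Next I would translate the spectral hypothesis into a statement about the sequence $\Delta_1(f,n)$. The operator $\Delta_1(f,N)$ is the diagonal operator with eigenvalues $\Delta_1(f,n)>0$, so its spectrum is the closure of $\{\Delta_1(f,n)\mid n\in\mathbb N\}$. The assumption $0\in\sigma(\Delta_1(f,N))$ is therefore equivalent to $\inf_{n\in\mathbb N}\Delta_1(f,n)=0$, which gives a subsequence $(n_j)$ with $\Delta_1(f,n_j)\to 0$. Combined with the previous bound and $\|\xi_{n_j}\|=1$, this yields
\[
\sup_{\varphi\in\mathrm{D}(T_f)\setminus\{0\}}\frac{\|T_f\varphi\|}{\|\varphi\|}\geq\sup_j\frac{1}{\Delta_1(f,n_j)}=\infty,
\]
so $T_f$ is unbounded.

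There is no real obstacle here: the argument is essentially just the observation that the diagonal part of $T_f\xi_n$ already contains the reciprocal of $\Delta_1(f,n)$, which the hypothesis forces to be unbounded. The only point requiring a tiny bit of care is the identification of $\sigma(\Delta_1(f,N))$ with the closure of its eigenvalue set, which is standard for multiplication operators on $\ell^2(\mathbb N)$.
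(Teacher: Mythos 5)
Your argument is correct and self-contained: the identity $\|T_f\xi_n\|^2=\sum_{k\geq1}(f(n+k)-f(n))^{-2}+\sum_{k=1}^{n}(f(n)-f(n-k))^{-2}\geq \Delta_1(f,n)^{-2}$ is exactly right (and legitimate to pass to the limit $m\to\infty$ since $\xi_n\in\mathrm{D}(T_f)$ by Lemma \ref{dom_Tf}), and since $\Delta_1(f,n)>0$ for all $n$, the hypothesis $0\in\sigma(\Delta_1(f,N))$ is indeed equivalent to $\inf_n\Delta_1(f,n)=0$. The paper does not prove this proposition itself but defers to \cite[Theorem 5.1]{AM08}, where essentially this same computation is carried out; your write-up simply makes that argument explicit, and no gap remains.
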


\begin{proof}
	 See \cite[Theorem 5.1]{AM08}.
\end{proof}
Let $f\colon\mathop{\mathrm{dom}}(f)\to\mathbb C$. 
In this paper, we denote by 
$f^2$ the function $f^2\colon\mathop{\mathrm{dom}}(f)\to\mathbb C$, 
$f^2(x)=f(x)^2$ for each $x\in\mathop{\mathrm{dom}}(f)$. 
In what follows we consider operators of the form $ f(N)T_{f^2}+T_{f^2} f(N)$. 

\begin{lemma}\label{D_fT}
	Let $f\in \mathcal K^-$. 
	Then 
	$\ell^2_\mathrm{fin}(\mathbb N)\subset\mathrm{D} \left(f(N)T_{f^2}\right)$ 
	and 
	\begin{equation*}
		\lim_{m\to\infty} f(N)T_{f^2, m}\xi_n=f(N)T_{f^2}\xi_n
	\end{equation*}
	for all $n\in\mathbb N$.
\end{lemma}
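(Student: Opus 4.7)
The plan is to use the closedness of the self-adjoint operator $f(N)$. Since each finite truncation $T_{f^2,m}\xi_n$ is a finite linear combination of basis vectors $\xi_j$ (the first sum in $T_{f^2,m}\xi_n$ produces $\xi_{n+k}$ for $k=1,\ldots,m$, and the opposite-sign sum terminates at $k=n$), we have $T_{f^2,m}\xi_n \in \ell^2_\mathrm{fin}(\mathbb N) \subset \mathrm{D}(f(N))$, so that $f(N)T_{f^2,m}\xi_n$ is well defined for every $m$. I would first check that $f^2 \in \mathcal K^-$: positivity and strict monotonicity carry over immediately from $f$, and $1/f \in \ell^2(\mathbb N)$ forces $1/f(n) \to 0$, hence $1/f(n)^4 \leq 1/f(n)^2$ for large $n$, so that $1/f^2 \in \ell^2(\mathbb N)$ as well. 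Consequently Lemma~\ref{dom_Tf} applied to $f^2$ yields $T_{f^2,m}\xi_n \to T_{f^2}\xi_n$ in $\ell^2(\mathbb N)$, and it remains only to show that $\{f(N)T_{f^2,m}\xi_n\}_m$ is itself Cauchy in $\ell^2(\mathbb N)$; closedness of $f(N)$ then delivers both $T_{f^2}\xi_n \in \mathrm{D}(f(N))$ and the claimed convergence in one stroke.

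The second step is an explicit computation followed by the key estimate. Applying $f(N)$ termwise to the finite expansion of $T_{f^2,m}\xi_n$ gives
\begin{equation*}
f(N)T_{f^2,m}\xi_n = i\sum_{k=1}^{m}\frac{f(n+k)}{f(n+k)^2 - f(n)^2}\,\xi_{n+k} - i\sum_{k=1}^{\min(m,n)}\frac{f(n-k)}{f(n)^2 - f(n-k)^2}\,\xi_{n-k}.
\end{equation*}
For $m_2 > m_1 \geq n$ the lower sum cancels, and orthogonality of $\{\xi_{n+k}\}_k$ yields
\begin{equation*}
\bigl\|f(N)(T_{f^2,m_2} - T_{f^2,m_1})\xi_n\bigr\|^2 = \sum_{k=m_1+1}^{m_2}\frac{f(n+k)^2}{(f(n+k)^2 - f(n)^2)^2}.
\end{equation*}
Factoring $f(n+k)^2 - f(n)^2 = (f(n+k)-f(n))(f(n+k)+f(n))$ and using $f(n+k) \leq f(n+k)+f(n)$ bounds the summand by $1/(f(n+k)-f(n))^2$, which I would then control by the same estimate used in the proof of Lemma~\ref{dom_Tf}, namely
\begin{equation*}
\frac{1}{(f(n+k)-f(n))^2} \leq \left(1 - \frac{f(n)}{f(n+1)}\right)^{-2}\frac{1}{f(n+k)^2},
\end{equation*}
whose sum over $k$ converges by $f \in \mathcal K^-$ and therefore has vanishing tail.

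The main obstacle is precisely the extra factor of $f(n+k)$ in the numerator that distinguishes this Cauchy calculation from the corresponding one for $T_{f^2}\xi_n$ alone; treated naively, this factor would threaten summability. What rescues the estimate is the difference-of-squares factorization of the denominator, which supplies a compensating factor $f(n+k)+f(n) \geq f(n+k)$ that cancels the offending numerator exactly; after this cancellation the problem reduces to the very summability already guaranteed for $T_{f^2}$. Once the Cauchy property is in hand, closedness of the self-adjoint operator $f(N)$ closes the argument.
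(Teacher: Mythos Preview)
Your proposal is correct and follows essentially the same route as the paper: you establish the Cauchy property of $\{f(N)T_{f^2,m}\xi_n\}_m$ via the identical summand bound $\frac{f(n+k)^2}{(f(n+k)^2-f(n)^2)^2}\le\bigl(1-\tfrac{f(n)}{f(n+1)}\bigr)^{-2}f(n+k)^{-2}$, and then invoke closedness of $f(N)$. Your write-up is in fact more complete than the paper's---you verify $f^2\in\mathcal K^-$, explain why $T_{f^2,m}\xi_n\in\mathrm{D}(f(N))$, and spell out the closedness step---whereas the paper jumps directly to the estimate.
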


\begin{proof} Similarly to the proof of Lemma \ref{dom_Tf}, 
	for any $n\leq m_1\leq m_2$, we have 
	\begin{align*}
		\left\| f(N)\left(T_{f^2, m_2}-T_{f^2, m_1}\right)\xi_n\right\|^2
		&=\sum_{k=m_1+1}^{m_2} \frac{f(n+k)^2}{(f(n+k)^2- f(n)^2)^2}\\
		&\leq\left(1-\frac{f(n)}{f(n+1)}\right)^{-2}\sum_{k=m_1+1}^{m_2}\frac{1}{f(n+k)^2}.
	\end{align*}
	Therefore $\lim_{m\to\infty} f(N)T_{f^2, m}\xi_n$ exists.
	Since $f(N)$ is a closed operator, we obtain desired conclusion. 
\end{proof}

The next lemma shows that $T_ f$ is identical to $ f(N)T_{f^2}+T_{f^2} f(N)$ 
on $\ell^2_{\rm fin}(\mathbb N)$. 

\begin{lemma}\label{T_p1}
Let $f\in \mathcal K^-$. 
Then 
\begin{align}\label{good}
 f(N)T_{f^2}+T_{f^2} f(N)= T_ f
\end{align}
on $\ell^2_\mathrm{fin}(\mathbb N)$ and 
\begin{align*}
[ f(N), f(N)T_{f^2}+T_{f^2} f(N)]=-i\one
\end{align*}
on $(\one-L^\ast)\ell^2_\mathrm{fin}(\mathbb N)$.
\end{lemma}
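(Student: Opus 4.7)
The plan is to verify both identities by direct computation on the basis vectors $\xi_n$, using the truncations $T_{f^2,m}$ and taking $m\to\infty$.

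First I would establish \eqref{good} pointwise. Since $\xi_n$ is a simultaneous eigenvector of the operators $\Delta_k(f,N)$ and $\Delta_k(f^2,N)$, and since $L^k\xi_n=\xi_{n-k}$ (with the convention $\xi_j=0$ for $j<0$), a direct computation gives
\begin{align*}
T_{f^2,m}\xi_n = i\sum_{k=1}^{m}\left(\frac{\xi_{n+k}}{f(n+k)^2-f(n)^2}-\frac{\xi_{n-k}}{f(n)^2-f(n-k)^2}\mathbf{1}_{k\le n}\right).
\end{align*}
Applying $f(N)$ on the left multiplies $\xi_{n\pm k}$ by $f(n\pm k)$, while $T_{f^2,m}f(N)\xi_n=f(n)T_{f^2,m}\xi_n$ multiplies by the scalar $f(n)$. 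Adding the two and using the factorisation $f(n+k)^2-f(n)^2=(f(n+k)-f(n))(f(n+k)+f(n))$ (and similarly for the other term), the coefficients collapse:
\begin{equation*}
\frac{f(n+k)+f(n)}{f(n+k)^2-f(n)^2}=\frac{1}{\Delta_k(f,n)},
\end{equation*}
and likewise for the $\xi_{n-k}$ term. One then recognises the right-hand side as exactly $T_{f,m}\xi_n$.

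Next I would pass to the limit $m\to\infty$ on each basis vector. The limit of $T_{f^2,m}\xi_n$ exists by Lemma \ref{dom_Tf} (so $T_{f^2,m}f(N)\xi_n=f(n)T_{f^2,m}\xi_n\to f(n)T_{f^2}\xi_n$), and the limit of $f(N)T_{f^2,m}\xi_n$ exists and equals $f(N)T_{f^2}\xi_n$ by Lemma \ref{D_fT}. The limit of the right-hand side is $T_f\xi_n$ by definition. Hence \eqref{good} holds on each $\xi_n$, and by linearity on all of $\ell^2_{\mathrm{fin}}(\mathbb N)$.

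For the commutator identity, I would simply combine \eqref{good} with Corollary \ref{UTc}: the latter asserts that $T_f$ is a time operator of $f(N)$ whose CCR-domain contains $(\one-L^\ast)\ell^2_{\mathrm{fin}}(\mathbb N)$, so $[f(N),T_f]=-i\one$ there. Since \eqref{good} identifies $T_f$ with $f(N)T_{f^2}+T_{f^2}f(N)$ on $\ell^2_{\mathrm{fin}}(\mathbb N)\supset(\one-L^\ast)\ell^2_{\mathrm{fin}}(\mathbb N)$, and both summands map this subspace into $\ell^2_{\mathrm{fin}}(\mathbb N)$ on which $f(N)$ acts diagonally, the commutator identity transfers verbatim. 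The only genuine subtlety is bookkeeping around $m\ge n$ versus $m<n$ in $T_{f^2,m}\xi_n$ and the interchange of $\lim_{m\to\infty}$ with the unbounded $f(N)$, both of which are handled by Lemma \ref{D_fT}. Beyond that the proof is a transparent algebraic collapse driven by the difference-of-squares identity.
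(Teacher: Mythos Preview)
Your proof is correct and follows essentially the same route as the paper: both establish the truncated identity $f(N)T_{f^2,m}+T_{f^2,m}f(N)=T_{f,m}$ via the difference-of-squares factorisation (the paper does it operator-theoretically, you do it on basis vectors), pass to the limit via Lemma~\ref{D_fT}, and then read off the commutator from Corollary~\ref{UTc}. One remark: your claim that ``both summands map this subspace into $\ell^2_{\mathrm{fin}}(\mathbb N)$'' is false (for instance $T_{f^2}\xi_n$ has infinite support), but this claim is not needed---Corollary~\ref{UTc} already places $(\one-L^\ast)\ell^2_{\mathrm{fin}}(\mathbb N)$ inside the CCR-domain, which is exactly the domain statement required.
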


\begin{proof} 
From Lemma \ref{D_fT}, for any $\varphi\in\ell^2_{\rm fin}(\mathbb N)$, 
\begin{equation*}
\left(f(N)T_{f^2}+T_{f^2} f(N)\right)\varphi
=\lim_{m\to\infty}\left(f(N)T_{f^2,m}+T_{f^2,m} f(N)\right)\varphi.
\end{equation*}
For each $m$, we obtain 
\begin{align*}
&\left(f(N)T_{f^2,m}+T_{f^2,m} f(N)\right)\varphi\\
&=i\sum_{k=1}^m \left(L^\ast{}^k\left(f(N+k)+ f(N)\right)\Delta_k\left(f^2, N\right)^{-1}-\Delta_k\left(f^2, N\right)^{-1}\left(f(N+k)+ f(N)\right)L^k\right)\varphi\\
&=i\sum_{k=1}^m \left(L^\ast{}^k\Delta_k(f,N)^{-1}-\Delta_k(f,N)^{-1}L^k\right)\varphi\\
&=T_{f,m}\varphi.
\end{align*}
Hence we see that $\varphi\in\mathrm D(T_f)$ 
and $ f(N)T_{f^2}+T_{f^2} f(N)= T_ f$ on $\ell^2_{\rm fin}(\mathbb N)$. 

Since $T_f$ is a time operator of $f(N)$ with a CCR-domain $(\one-L^\ast)\ell^2_\mathrm{fin}(\mathbb N)$, 
\begin{equation*}
\left[f(N), f(N)T_{f^2}+T_{f^2}f(N)\right]=[f(N), T_f]=-i\one 
\end{equation*}
on $(\one-L^\ast)\ell^2_\mathrm{fin}(\mathbb N)$.
\end{proof}

Intuitively it may be hard to show the (essentially) self-adjointness of 
$ f(N)T_{f^2}+T_{f^2} f(N)$ or $T_ f$ themselves, 
since 
operators $ f(N)T_{f^2}+T_{f^2} f(N)$ and $T_ f$
are unbounded both from above and below, 
and a CCR-domain $(\one-L^\ast)\ell^2_\mathrm{fin}(\mathbb N)$ is not a core of $ f(N)$. 
So we add extra term $ f(N)^\beta$ to 
$ f(N)T_{f^2}+T_{f^2} f(N)$. 
Note that $[N, f(N)^\beta]\subset 0$. 
Hence we consider 
$ f(N)T_{f^2}+T_{f^2} f(N)+\gamma f(N)^\beta$ instead of 
$ f(N)T_{f^2}+T_{f^2} f(N)$ and 
show that 
$ f(N)T_{f^2}+T_{f^2} f(N)+\gamma f(N)^\beta$ 
is self-adjoint by the fact that 
$ f(N)T_{f^2}+T_{f^2} f(N)$ is relatively small with respect to 
$ f(N)^\beta$. 

We introduce classes $\mathcal M(\beta)$ and $\mathcal M_\mathrm{s}(\beta)$ of functions on $\mathbb N$. 
\begin{definition}[$\mathcal M(\beta)$ and $\mathcal M_\mathrm{s}(\beta)$]\label{A2}
Let $\beta\geq0$. Denote by $\mathcal M(\beta)$ the set of all functions $f\in\mathcal K^-$ such that 
there exist functions
$g\colon\mathbb N \to(0,\infty)$ and 
$h\in\ell^1(\mathbb N _{\geq1}, \mathbb R )$ satisfying the following conditions:
\begin{enumerate}[{\rm (1)}]
\item $ f^2/g\in\ell^1(\mathbb N)$,
\item for any $n\in\mathbb N $ and $k\geq 1$,
\begin{equation}\label{ac1}
\frac{g(n)}{\left(f(n)^{\beta}\Delta_k(f^2, n)\right)^{2}}\leq h(k)^2.
\end{equation}
\end{enumerate} 
Write $\mathcal M_\mathrm{s}(\beta)$ the set of functions $f\in \mathcal M(\beta)$ that, for the above function $g$, 
there exists a constant $C>0$ such that
\begin{equation}\label{ac2}
\sup_{n\in\mathbb N}
\sum_{k=1}^n\frac{g(n)}{\left( f(n-k)^{\beta}\left(f(n)^2- f(n-k)^2\right)\right)^2}<C.
\end{equation}
\end{definition}

\begin{lemma}\label{bd_T2}
Let $f\in \mathcal M(1)$. 
Then $T_{f^2}$ is bounded.
\end{lemma}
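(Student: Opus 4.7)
The strategy is to apply Schur's test to the formal matrix of $T_{f^2}$ in the orthonormal basis $\{\xi_n\}_{n\in\mathbb N}$. First I would unpack $T_{f^2,m}\xi_n$ using $L^k\xi_n=\xi_{n-k}$ (zero for $k>n$), $L^{*k}\xi_n=\xi_{n+k}$, and $\Delta_k(f^2,N)^{-1}\xi_n=\xi_n/\Delta_k(f^2,n)$. Taking $m\to\infty$ identifies the off-diagonal matrix elements as
\[
A_{mn}:=(\xi_m,T_{f^2}\xi_n)=\frac{i}{f(m)^2-f(n)^2}\qquad(m\neq n),
\]
with $A_{nn}=0$; in particular $|A_{mn}|=|A_{nm}|$, so $T_{f^2}$ is (formally) Hermitian.

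Next I would translate $f\in\mathcal M(1)$ into a pointwise decay bound on these entries. Inequality (\ref{ac1}) with $\beta=1$ rearranges to
\[
\frac{1}{\Delta_k(f^2,n)}\leq h(k)\,\frac{f(n)}{\sqrt{g(n)}}
\]
uniformly in $n\in\mathbb N$ and $k\geq 1$. Condition (1) of Definition \ref{A2} ($f^2/g\in\ell^1(\mathbb N)$) forces $f(n)^2/g(n)\to 0$, so $M:=\sup_n f(n)/\sqrt{g(n)}<\infty$, and combining yields $|A_{mn}|\leq M h(|m-n|)$ for all $m\neq n$. Splitting the column sum at fixed $n$ into the forward part ($m=n+k$, $k\geq 1$) and the backward part ($m=n-k$, $1\leq k\leq n$), and invoking the previous estimate with base points $n$ and $n-k$ respectively, the column (and by the symmetry $|A_{mn}|=|A_{nm}|$, row) sums are uniformly bounded by $2M\|h\|_{\ell^1(\mathbb N_{\geq 1})}$.

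The Schur test with constant weights then gives $\|T_{f^2,m}\varphi\|\leq 2M\|h\|_{\ell^1}\|\varphi\|$ uniformly in $m$ for every $\varphi\in\ell^2_\mathrm{fin}(\mathbb N)$; applying Schur to the tail matrix $T_{f^2}-T_{f^2,m}$, whose nonzero entries lie in $|m-n|>m$ and hence have row/column sums bounded by the $\ell^1$-tail $M\sum_{k>m}h(k)\to 0$, upgrades this to operator-norm convergence $T_{f^2,m}\to T_{f^2}$ on $\ell^2_\mathrm{fin}(\mathbb N)$, so $T_{f^2}$ extends uniquely to a bounded operator on all of $\ell^2(\mathbb N)$. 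The only delicate point is that (\ref{ac1}) must be invoked at the shifted base point $n-k$ to control the backward tail, which is legitimate precisely because (\ref{ac1}) is uniform in the base point; no additional hypothesis from $\mathcal M(1)$ is required.
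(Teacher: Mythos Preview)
Your argument is correct and rests on exactly the same estimate the paper isolates, namely $\Delta_k(f^2,n)^{-1}\le \bigl(\sup_n f(n)/\sqrt{g(n)}\bigr)\,h(k)$ uniformly in $n$, which follows from \eqref{ac1} with $\beta=1$ together with $f^2/g\in\ell^1(\mathbb N)$. The paper simply reads this as $\sum_{k\ge1}\|\Delta_k(f^2,N)^{-1}\|<\infty$ and invokes Lemma~\ref{bd_T}, whereas you reprove that lemma via Schur's test; the substance is the same.
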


\begin{proof} 
By (1) of Definition \ref{A2}, $\sup_{n\in\mathbb N } f(n)^2/g(n)$ is finite. 
From \eqref{ac1}, we have
\begin{align*}
\left\|\Delta_k\!\left(f^2, N\right)^{-1}\right\|&\leq\sup_{n\in\mathbb N }\Delta_k\!\left(f^2, n\right)^{-1}
\leq\sup_{n\in\mathbb N }\left(f(n)^2/g(n)\right)^{1/2}h(k).
\end{align*}
Since $h\in\ell^1(\mathbb N _{\geq1},\mathbb R )$, by Lemma \ref{bd_T}, we see that $T_{f^2}$ is bounded .
\end{proof}
\begin{lemma}\label{rb0}
Let $f\in \mathcal M_\mathrm{s}(\beta)$
Then $ f(N)T_{f^2}$ is relatively bounded with respect to 
$ f(N)^\beta$, i.e., there exists some constant $a>0$ such that for all 
$\varphi\in\mathrm{D}(f(N)^\beta)$
\begin{align*}\label{rb1}
\left\|\overline{f(N)T_{f^2}}\varphi\right\|\leq a\left\| f(N)^\beta\varphi\right\|.
\end{align*}
Here $\overline{A}$ denotes the closure of the linear operator $A$.
\end{lemma}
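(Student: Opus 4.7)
The plan is to establish the inequality on $\ell^2_{\mathrm{fin}}(\mathbb{N})$---which is a core for $f(N)^\beta$---and then transport it to all of $\mathrm{D}(f(N)^\beta)$ by a density argument which simultaneously verifies $\mathrm{D}(f(N)^\beta)\subset\mathrm{D}(\overline{f(N)T_{f^2}})$. For $\varphi=\sum_{n}c_n\xi_n\in\ell^2_{\mathrm{fin}}(\mathbb{N})$, Lemma \ref{D_fT} combined with the identities $L^{\ast k}\xi_n=\xi_{n+k}$, $L^k\xi_n=\xi_{n-k}$ (for $k\leq n$) and $\Delta_k(f^2,N)^{-1}\xi_n=\xi_n/(f(n+k)^2-f(n)^2)$ gives
\begin{equation*}
(f(N)T_{f^2}\varphi)_j=if(j)\left(\sum_{n<j}\frac{c_n}{f(j)^2-f(n)^2}-\sum_{n>j}\frac{c_n}{f(n)^2-f(j)^2}\right).
\end{equation*}
I split this into $A_j$ (contribution from $n<j$) and $B_j$ (from $n>j$) and estimate $\sum_j|A_j|^2$ and $\sum_j|B_j|^2$ separately in terms of $\|f(N)^\beta\varphi\|^2=\sum_n f(n)^{2\beta}|c_n|^2$.

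For $A_j$, reparametrize $k=j-n$ and apply Cauchy--Schwarz with weight $f(j-k)^{2\beta}$; condition \eqref{ac2} bounds the resulting weight sum by $C/g(j)$, so $|A_j|^2\leq Cf(j)^2/g(j)\cdot\sum_{k=1}^{j}f(j-k)^{2\beta}|c_{j-k}|^2$. Summing on $j$ and swapping via $n=j-k$ converts this into $C\sum_n f(n)^{2\beta}|c_n|^2\sum_{j>n}f(j)^2/g(j)$, whose inner sum is controlled by $\|f^2/g\|_{\ell^1(\mathbb{N})}<\infty$ from condition (1) of Definition \ref{A2}. Hence $\sum_j|A_j|^2\leq C\|f^2/g\|_{\ell^1}\|f(N)^\beta\varphi\|^2$.

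For $B_j$, use \eqref{ac1} to replace $1/(f(j+k)^2-f(j)^2)$ by $f(j)^\beta h(k)/\sqrt{g(j)}$, giving $B_j\leq f(j)^{1+\beta}g(j)^{-1/2}\sum_k h(k)|c_{j+k}|$. Cauchy--Schwarz in the form $(\sum_k h(k)|c_{j+k}|)^2\leq\|h\|_{\ell^1}\sum_k h(k)|c_{j+k}|^2$ followed by the interchange $n=j+k$ reduces $\sum_j|B_j|^2$ to a constant multiple of $\sum_n|c_n|^2\sum_{k\geq 1}h(k)f(n-k)^{2+2\beta}/g(n-k)$. Monotonicity of $f$ yields $f(n-k)^{2+2\beta}/f(n)^{2\beta}\leq f(n-k)^2$, and since every term of an $\ell^1$ sequence is dominated by its $\ell^1$ norm one has $\sup_m f(m)^2/g(m)\leq\|f^2/g\|_{\ell^1}$; combining these gives $\sum_j|B_j|^2\leq\|h\|_{\ell^1}^2\|f^2/g\|_{\ell^1}\|f(N)^\beta\varphi\|^2$.

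Adding the two estimates yields $\|f(N)T_{f^2}\varphi\|\leq a\|f(N)^\beta\varphi\|$ on $\ell^2_{\mathrm{fin}}(\mathbb{N})$. Since $\ell^2_{\mathrm{fin}}(\mathbb{N})$ is a core for $f(N)^\beta$ (being the linear span of its eigenbasis), any $\varphi\in\mathrm{D}(f(N)^\beta)$ admits $\varphi_\ell\in\ell^2_{\mathrm{fin}}(\mathbb{N})$ with $\varphi_\ell\to\varphi$ and $f(N)^\beta\varphi_\ell\to f(N)^\beta\varphi$; the relative bound then makes $\{f(N)T_{f^2}\varphi_\ell\}$ Cauchy, forcing $\varphi\in\mathrm{D}(\overline{f(N)T_{f^2}})$ with the inequality preserved. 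The main obstacle is the bookkeeping in piece $B_j$: the summation swap combined with the monotonicity trick that turns $f(n-k)^{2+2\beta}/f(n)^{2\beta}$ into $f(n-k)^2$, together with the extraction of the pointwise bound on $f^2/g$ from its $\ell^1$ summability, is what allows the relative bound to close.
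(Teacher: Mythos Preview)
Your argument is correct, but the paper takes a different and somewhat shorter route. Instead of estimating the columns of $f(N)T_{f^2}$ against $f(N)^\beta\varphi$ via weighted Cauchy--Schwarz, the paper passes to the composition $A=f(N)T_{f^2}f(N)^{-\beta}$ and bounds its operator norm by the Hilbert--Schmidt norm of the adjoint: writing $\|A\varphi\|^2=\sum_n|(\xi_n,A\varphi)|^2\leq\|\varphi\|^2\sum_n\|f(N)^{-\beta}T_{f^2}f(N)\xi_n\|^2$, one only has to compute $\|f(N)^{-\beta}T_{f^2}\xi_n\|^2$. Because $T_{f^2}\xi_n$ expands in the orthogonal system $\{\xi_{n\pm k}\}$, that norm is a clean sum of two series which conditions \eqref{ac1} and \eqref{ac2} bound by $(\,\|h\|_{\ell^2}^2+C)/g(n)$; the outer sum over $n$ then closes via $f^2/g\in\ell^1$. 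Your approach instead treats the above- and below-diagonal parts of the kernel by Schur-type estimates, which costs you two summation interchanges and the monotonicity trick $f(n-k)^{2\beta}\leq f(n)^{2\beta}$ in the $B_j$ piece. The payoff is that your argument is entirely elementary (no duality/Hilbert--Schmidt step) and makes the role of each hypothesis transparent; the paper's argument is cleaner because orthogonality of $T_{f^2}\xi_n$ kills all cross terms at the outset.
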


\begin{proof}
It is sufficient to show that $\overline{f(N)T_{f^2}} f(N)^{-\beta}$ is bounded. 
For any $\varphi\in\ell^2_\mathrm{fin}(\mathbb N)$, 
\begin{align*}
\left\| f(N)T_{f^2} f(N)^{-\beta}\varphi\right\|^2&=\sum_{n=0}^\infty\left(\xi_n, f(N)T_{f^2} f(N)^{-\beta}\varphi\right)^2\\
&\leq\|\varphi\|^2\sum_{n=0}^\infty\left\| f(N)^{-\beta}T_{f^2} f(N)\xi_n\right\|^2\\
&=\|\varphi\|^2\sum_{n=0}^\infty f(n)^2\left\| f(N)^{-\beta}T_{f^2}\xi_n\right\|^2.
\end{align*}
For all $n\in\mathbb N$, we see that
\begin{align*}
\left\| f(N)^{-\beta}T_{f^2}\xi_n\right\|^2
&=\left\|\sum_{k\geq1} f(N)^{-\beta}\left(L^\ast{}^k\Delta_k(f^2,N)^{-1}-\Delta_k(f^2,N)^{-1}L^k\right)\xi_n\right\|^2\\
&=\sum_{k\geq1}\frac{1}{f(n+k)^{2\beta}\Delta_k\left(f^2, n\right)^2}+\sum_{k=1}^n\frac{1}{f(n-k)^{2\beta}(f(n)^2- f(n-k)^2)^2}\\
&\leq \frac{1}{g(n)}\left(\sum_{k\geq1}h(k)^2+C\right).
\end{align*}
Thus we have
\begin{align*}
\left\|f(N)T_{f^2} f(N)^{-\beta}\varphi\right\|^2\leq \|\varphi\|^2
\left(\|h\|_{\ell^2}^2+C\right)\sum_{n=0}^\infty f(n)^2g(n)^{-1}.
\end{align*}
Therefore $\overline{f(N)T_{f^2}}f(N)^{-\beta}$ is bounded.
\end{proof}

\begin{remark}\label{rb_T}
	Let $\beta\geq 1$, $f\in \mathcal M_\mathrm{s}(\beta)$ and $T_f^2$ be bounded. 
	Since the operator $T_f$ is equal to $f(N)T_{f^2}+T_{f^2}f(N)$ on $\ell^2_\mathrm{fin}(\mathbb N)$ by Lemma \ref{T_p1}, 
	it is easy to see that $\overline{T_f}$ is also relatively bounded with respect to $f(N)^\beta$, i.e., 
	there exists a constant $a>0$ such that for all $\varphi\in\mathrm{D}(f(N)^\beta)$
	\begin{equation*}
		\left\|\overline{T_{f}}\varphi\right\|\leq a\left\|f(N)^\beta\varphi\right\|.
	\end{equation*}
\end{remark}

\begin{proposition}\label{sa_T2}
Let $f\in \mathcal M_\mathrm{s}(1)$. 
Then {\rm (1)} and {\rm (2)} follow.
\begin{enumerate}[{\rm (1)}]
	\item $ f(N)T_{f^2}+T_{f^2} f(N)+\gamma f(N)$ is self-adjoint 
		if $|\gamma|\gg1$.
	\item $f(N)T_{f^2}+T_{f^2} f(N)+\gamma f(N)^2$ is self-adjoint 
		if $\gamma\in\mathbb R \setminus\{0\}$.
\end{enumerate}
\end{proposition}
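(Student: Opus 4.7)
The strategy is to apply the Kato--Rellich theorem, viewing $f(N)T_{f^2}+T_{f^2}f(N)$ as a symmetric perturbation of the self-adjoint dominant operator $\gamma f(N)$ in case~(1) and $\gamma f(N)^2$ in case~(2). The two ingredients I rely on are Lemma \ref{bd_T2} (which gives boundedness of $T_{f^2}$, since $\mathcal{M}_\mathrm{s}(1)\subseteq\mathcal{M}(1)$) and Lemma \ref{rb0} (which gives $f(N)$-boundedness of $\overline{f(N)T_{f^2}}$ with some constant $a>0$).

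First I would extend the formal expression $S_0:=f(N)T_{f^2}+T_{f^2}f(N)$ from the dense core $\ell^2_\mathrm{fin}(\mathbb N)$, where Lemma \ref{T_p1} identifies it with the symmetric operator $T_f$, to the larger domain $\mathrm{D}(f(N))$. The summand $T_{f^2}f(N)$ is already defined on $\mathrm{D}(f(N))$ by boundedness of $T_{f^2}$. For the other summand, using that $\ell^2_\mathrm{fin}(\mathbb N)$ is a core for $f(N)$ together with the $f(N)$-bound of Lemma \ref{rb0}, every $\varphi\in\mathrm{D}(f(N))$ lies in $\mathrm{D}\bigl(\overline{f(N)T_{f^2}}\bigr)$; call the resulting extension $\tilde S$ on $\mathrm{D}(f(N))$. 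A limit argument approximating in the $f(N)$-graph norm then transfers symmetry of $S_0$ on $\ell^2_\mathrm{fin}(\mathbb N)$ (inherited from $T_f$) to symmetry of $\tilde S$ on $\mathrm{D}(f(N))$, and yields the estimate
\begin{equation*}
\bigl\|\tilde S\varphi\bigr\|\leq\bigl(a+\|T_{f^2}\|\bigr)\bigl\|f(N)\varphi\bigr\|,\quad \varphi\in\mathrm{D}(f(N)).
\end{equation*}

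For part~(1), this makes $\tilde S$ relatively bounded with respect to $\gamma f(N)$ with relative bound $(a+\|T_{f^2}\|)/|\gamma|$, which is strictly less than $1$ as soon as $|\gamma|>a+\|T_{f^2}\|$; Kato--Rellich then shows that $\tilde S+\gamma f(N)$ is self-adjoint on $\mathrm{D}(f(N))$. For part~(2), I would combine the same estimate with the elementary inequality $f(n)\leq\epsilon f(n)^2+1/(4\epsilon)$, valid for every $\epsilon>0$ (since $f>0$), to obtain
\begin{equation*}
\bigl\|f(N)\varphi\bigr\|\leq\epsilon\bigl\|f(N)^2\varphi\bigr\|+\frac{1}{4\epsilon}\|\varphi\|,\quad \varphi\in\mathrm{D}(f(N)^2).
\end{equation*}
Choosing $\epsilon$ so small that $(a+\|T_{f^2}\|)\epsilon<|\gamma|$ makes the relative bound of $\tilde S$ with respect to $\gamma f(N)^2$ strictly less than $1$, and Kato--Rellich again delivers self-adjointness of $\tilde S+\gamma f(N)^2$ on $\mathrm{D}(f(N)^2)$ for every $\gamma\in\mathbb{R}\setminus\{0\}$.

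The main obstacle I anticipate is not the Kato--Rellich book-keeping but the preliminary extension step: one must check that after passing from the algebraic identity $S_0=T_f$ on $\ell^2_\mathrm{fin}(\mathbb N)$ to the extension on $\mathrm{D}(f(N))$, the extended operator is genuinely symmetric and reasonably interpreted as ``$f(N)T_{f^2}+T_{f^2}f(N)$'' in the sense of the statement. This is where the relative bound from Lemma \ref{rb0}, the boundedness of $T_{f^2}$ from Lemma \ref{bd_T2}, and the core property of $\ell^2_\mathrm{fin}(\mathbb N)$ for $f(N)$ work together; once this is in place, both self-adjointness statements follow from standard perturbation theory.
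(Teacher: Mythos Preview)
Your proposal is correct and follows essentially the same route as the paper: combine the $f(N)$-bound from Lemma~\ref{rb0} with the boundedness of $T_{f^2}$ from Lemma~\ref{bd_T2}, then apply Kato--Rellich (using, for part~(2), that $f(N)$ is infinitesimally small with respect to $f(N)^2$). Your treatment is in fact more careful than the paper's, which simply asserts the relative bound of $f(N)T_{f^2}+T_{f^2}f(N)$ with respect to $f(N)$ without spelling out the extension from $\ell^2_\mathrm{fin}(\mathbb N)$ to $\mathrm{D}(f(N))$ or the symmetry check you flag as the main obstacle.
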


\begin{proof} 
(1) From Lemma \ref{rb0}, there exists a relative bound $a$ of $ f(N)T_{f^2}+T_{f^2} f(N)$ with respect to $ f(N)$. 
Thus if $|\gamma|>a$, by the Kato-Rellich theorem, the operator is self-adjoint. 

(2) 
From Lemma \ref{rb0}, we see that $ f(N)T_{f^2}+T_{f^2} f(N)$ is relatively bounded with respect to $ f(N)$. 
Since $ f(N)$ is infinitesimally small with respect to $ f(N)^2$, by the Kato-Relich theorem, $\gamma f(N)^2+ f(N)T_{f^2}+T_{f^2} f(N)$ is a self-adjoint operator.
\end{proof}

We are in the position to state the main theorem in this paper. 
\begin{theorem}\label{sa_T3}
Let $f\in \mathcal M_\mathrm{s}(1)$. 
Then 
$\overline{T_f}+\gamma f(N)^2$ is a self-adjoint time operator of $f(N)$ with a dense CCR-domain for all $\gamma\in\mathbb R \setminus\{0\}$. 
\end{theorem}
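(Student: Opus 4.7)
The plan is to realize $\overline{T_f}+\gamma f(N)^2$ as a Kato--Rellich perturbation of the self-adjoint operator $\gamma f(N)^2$. First I note that $\mathcal M_\mathrm s(1)\subset \mathcal M(1)$, so Lemma \ref{bd_T2} makes $T_{f^2}$ bounded; Remark \ref{rb_T} with $\beta=1$ then supplies a constant $a>0$ such that
\[
\|\overline{T_f}\varphi\|\leq a\,\|f(N)\varphi\|, \qquad \varphi\in \mathrm D(f(N)).
\]
In particular $\mathrm D(f(N)^2)\subset \mathrm D(f(N))\subset \mathrm D(\overline{T_f})$, and $\overline{T_f}$ is symmetric as the closure of the symmetric operator $T_f$.

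Second, since $f(N)$ is a positive unbounded self-adjoint operator, the spectral theorem combined with the elementary pointwise bound $x\leq \varepsilon x^2 + C(\varepsilon)$ (for $x\geq 0$, $\varepsilon>0$) shows that $f(N)$ is infinitesimally small with respect to $f(N)^2$; chaining with the previous estimate, so is $\overline{T_f}$. For any $\gamma\in\mathbb R\setminus\{0\}$ the Kato--Rellich theorem therefore yields the self-adjointness of $\overline{T_f}+\gamma f(N)^2$ on $\mathrm D(f(N)^2)$.

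Third, I would exhibit the CCR-domain $D := (\one-L^\ast)\ell^2_\mathrm{fin}(\mathbb N)$. Density is standard: any $\psi\in\ell^2(\mathbb N)$ orthogonal to every $\xi_n-\xi_{n+1}$ must be a constant sequence and hence zero. On $D\subset\ell^2_\mathrm{fin}(\mathbb N)$, $\overline{T_f}$ restricts to $T_f$, and $f(N)^2$ commutes with $f(N)$ on finite-support vectors; thus Lemma \ref{T_p1} gives
\[
[f(N),\,\overline{T_f}+\gamma f(N)^2]\varphi=[f(N),\,T_f]\varphi=-i\varphi,\qquad \varphi\in D,
\]
and hence $\overline{T_f}+\gamma f(N)^2$ is a time operator of $f(N)$ with dense CCR-domain $D$.

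The main structural obstacle---an $f(N)$-relative bound for $\overline{T_f}$---has already been dispatched by the earlier development culminating in Remark \ref{rb_T}. With that ingredient in hand, the remaining steps (the infinitesimal upgrade from $f(N)$ to $f(N)^2$, the Kato--Rellich conclusion, and the CCR verification on $(\one-L^\ast)\ell^2_\mathrm{fin}(\mathbb N)$) are routine.
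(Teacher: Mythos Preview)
Your proof is correct and follows essentially the same route as the paper: invoke Lemma~\ref{bd_T2} and Remark~\ref{rb_T} (which rests on Lemma~\ref{rb0} and \eqref{good}) to bound $\overline{T_f}$ relative to $f(N)$, upgrade to infinitesimal smallness with respect to $f(N)^2$, apply Kato--Rellich, and then verify the CCR on $(\one-L^\ast)\ell^2_\mathrm{fin}(\mathbb N)$. Your write-up is more explicit than the paper's terse proof---in particular you spell out the density of $(\one-L^\ast)\ell^2_\mathrm{fin}(\mathbb N)$ and the infinitesimal upgrade---but the underlying argument is the same.
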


\begin{proof}
	 From \eqref{good}, Lemma \ref{bd_T2}, Lemma \ref{rb0} and Remark \ref{rb_T}, we see that $\overline{T_f}+\gamma f(N)^2$ is self-adjoint for all $\gamma\in\mathbb R \setminus\{0\}$ and 
$[ f(N), T_ f+\gamma f(N)^2]=-i\one$ 
on $(\one-L^\ast)\ell^2_\mathrm{fin}(\mathbb N)$.
\end{proof}

\begin{example}\label{3/4}
Let 
$ f(x)=x^\lambda+1$ for
$\lambda\in(3/4,1)$. 
We show that $f\in \mathcal M_\mathrm{s}(1)$. 
Firstly, it is immediate that $f\in\mathcal K^-$. 
Let $\alpha\in(1+2\lambda, 6\lambda-2)$, $g(x)=x^\alpha+1$ and $\delta=6\lambda-2-\alpha$. 
Then the condition (1) of $\mathcal M(1)$ is satisfied.

Secondly, 
by the mean value theorem, we have
\begin{align*}
 f(n+k)- f(n)\geq \frac{\lambda k}{(n+k)^{1-\lambda}}.
\end{align*}
Thus we see that 
\begin{align*}
\frac{g(n)}{f(n)^{2}\Delta_k\left(f^2, n\right)^{2}}&=\frac{n^\alpha+1}{f(n)^2(f(n+k)^2- f(n)^2)^2}
\leq\frac{(n^\alpha+1)(n+k)^{2(1-\lambda)}}{\lambda^2(n^\lambda+1)^2(n+k)^{2\lambda}k^2}
\leq\frac{2}{\lambda^2k^{2+\delta}}.
\end{align*}
Thus the condition (2) of $\mathcal M(1)$ is satisfied and $f\in\mathcal M(1)$. 

Finally we see that 
\begin{align*}
&\lim_{n\to\infty}\sum_{k=1}^n\frac{g(n)}{f(n-k)^2\left(f(n)^2- f(n-k)^2\right)^2}\\
&=\lim_{n\to\infty}\left(
\sum_{k=1}^{[n/2]}\frac{g(n)}{f(n-k)^2\left(f(n)^2- f(n-k)^2\right)^2}
+\sum_{k=[n/2]+1}^n\frac{g(n)}{f(n-k)^2\left(f(n)^2- f(n-k)^2\right)^2}\right)\\
&\leq\lim_{n\to\infty}\frac{4^\lambda (n^\alpha+1) n^{2(1-\lambda)}}{\lambda^2(n^\lambda+1)^2n^{2\lambda}}\left(\sum_{k=1}^{[n/2]}\frac{1}{k^2}
+\frac{4^{1-\lambda}}{n^{2(1-\lambda)}}\sum_{k=[n/2]+1}^n\frac{1}{f(n-k)^2}\right)<\infty.
\end{align*}
Then the condition \eqref{ac2} is satisfied and $f\in\mathcal M_\mathrm{s}(1)$. 

We see that $T_f$ is unbounded by Proposition \ref{unb}
and that, from Theorem \ref{sa_T3}, $f(N)$ has a self-adjoint time operator with a dense CCR-domain.
\end{example}

Similarly to Proposition \ref{sa_T2} and Theorem \ref{sa_T3}, 
we can show the following statements. We omit proofs.
\begin{proposition}\label{sa_T5}
Let $f\in \mathcal M_\mathrm{s}(2)$ and $T_{f^2}$ is bounded. 
Then 
$f(N)T_{f^2}+T_{f^2}f(N)+\gamma f(N)^2$ is a self-adjoint time operator of $f(N)$ if $|\gamma|\gg1$. 
\end{proposition}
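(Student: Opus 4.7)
The plan is to follow the template of Proposition \ref{sa_T2}(1), but with $\beta=2$: the reference self-adjoint operator for the Kato--Rellich argument is now $\gamma f(N)^2$ rather than $\gamma f(N)$, so one only needs the relative bound of $f(N)T_{f^2}+T_{f^2}f(N)$ with respect to $f(N)^2$ to be strictly less than $|\gamma|$, which can be arranged by taking $|\gamma|$ large.

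First, since $f\in\mathcal M_\mathrm{s}(2)$, Lemma \ref{rb0} supplies a constant $a>0$ with
\begin{align*}
\left\|\overline{f(N)T_{f^2}}\,\varphi\right\|\leq a\left\|f(N)^2\varphi\right\|,\qquad \varphi\in\mathrm D(f(N)^2).
\end{align*}
For the companion term, the hypothesis that $T_{f^2}$ is bounded gives $\|T_{f^2}f(N)\varphi\|\leq \|T_{f^2}\|\,\|f(N)\varphi\|$ on $\mathrm D(f(N))$, while the elementary scalar inequality $t\leq \varepsilon t^2+1/(4\varepsilon)$ together with $f(N)\geq f(0)>0$ shows that $f(N)$, and hence $T_{f^2}f(N)$, is infinitesimally small relative to $f(N)^2$. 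Adding the two estimates, the operator (interpreted as the closure of the formal sum)
\begin{align*}
S:=\overline{f(N)T_{f^2}}+T_{f^2}f(N)
\end{align*}
is $f(N)^2$-bounded with relative bound at most $a$; it is symmetric because $T_{f^2}$ is bounded self-adjoint, which yields $(f(N)T_{f^2})^\ast\supseteq T_{f^2}f(N)$. The Kato--Rellich theorem applied to the self-adjoint operator $\gamma f(N)^2$ and the symmetric perturbation $S$ then delivers self-adjointness of $S+\gamma f(N)^2$ on $\mathrm D(f(N)^2)$ whenever $|\gamma|>a$.

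Finally, the CCR on the dense CCR-domain $(\one-L^\ast)\ell^2_\mathrm{fin}(\mathbb N)\subset\ell^2_\mathrm{fin}(\mathbb N)\subset\mathrm D(f(N)^2)$ is inherited via Lemma \ref{T_p1}: on this domain $S+\gamma f(N)^2$ coincides with $T_f+\gamma f(N)^2$, so that $[f(N),S+\gamma f(N)^2]=[f(N),T_f]+\gamma[f(N),f(N)^2]=-i\one$. The main delicacy I expect is the domain bookkeeping — in particular, the natural domain of $f(N)T_{f^2}$ need not contain $\mathrm D(f(N)^2)$, so one must work throughout with the closure provided by Lemma \ref{rb0} and then verify that this closure is precisely the operator produced by the Kato--Rellich theorem. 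The computation of the bound $a$ and the infinitesimal estimate of $f(N)$ relative to $f(N)^2$ are otherwise routine.
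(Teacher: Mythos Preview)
Your proposal is correct and follows essentially the same approach as the paper, which explicitly omits the proof and states that it is analogous to Proposition~\ref{sa_T2}; you have carried out precisely that analogy with $\beta=2$, using Lemma~\ref{rb0} for the $\overline{f(N)T_{f^2}}$ term, the assumed boundedness of $T_{f^2}$ for the $T_{f^2}f(N)$ term, and Kato--Rellich. Your extra care with closures and the symmetry of $S$ is appropriate and more detailed than the paper's one-line sketch for the $\beta=1$ case.
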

\begin{theorem}\label{sa_T6}
Let $f\in \mathcal M_\mathrm{s}(2)$ and $T_{f^2}$ is bounded. 
Then 
$\overline{T_f}+\gamma f(N)^3$ is a self-adjoint time operator of $f(N)$ with a dense CCR-domain for all $\gamma\in\mathbb R \setminus\{0\}$. 
\end{theorem}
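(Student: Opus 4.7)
My plan is to mirror the proof of Theorem \ref{sa_T3}, shifted by one power of $f$. Concretely, the strategy is: (i) obtain a relative bound of $\overline{T_f}$ against $f(N)^2$; (ii) observe that $f(N)^2$ is infinitesimally small with respect to $f(N)^3$; (iii) apply the Kato--Rellich theorem; (iv) verify the canonical commutation relation on a dense subspace.

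For step (i), since $f\in\mathcal M_\mathrm{s}(2)$ and $T_{f^2}$ is bounded by hypothesis, Remark \ref{rb_T} with $\beta=2$ supplies a constant $a>0$ such that $\|\overline{T_f}\varphi\|\leq a\|f(N)^2\varphi\|$ for every $\varphi\in\mathrm{D}(f(N)^2)$. For step (ii), the condition $1/f\in\ell^2(\mathbb N)$ forces $f(n)\to\infty$, so by the functional calculus for the self-adjoint operator $f(N)$, for every $\epsilon>0$ there exists $C_\epsilon>0$ with
\[
\|f(N)^2\varphi\|\leq\epsilon\|f(N)^3\varphi\|+C_\epsilon\|\varphi\|,\quad\varphi\in\mathrm{D}(f(N)^3);
\]
this is just the pointwise inequality $f(n)^2\leq\epsilon f(n)^3+C_\epsilon$, which holds with $C_\epsilon$ depending only on the finitely many indices $n$ satisfying $f(n)\leq 1/\epsilon$. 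Chaining the two estimates shows that for each $\gamma\in\mathbb R\setminus\{0\}$ and each $\epsilon'>0$ there exists $C'>0$ with
\[
\|\overline{T_f}\varphi\|\leq\epsilon'\|\gamma f(N)^3\varphi\|+C'\|\varphi\|,\quad\varphi\in\mathrm{D}(f(N)^3),
\]
i.e., $\overline{T_f}$ has relative bound zero with respect to $\gamma f(N)^3$.

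For step (iii), $\gamma f(N)^3$ is self-adjoint on $\mathrm{D}(f(N)^3)$ and $\overline{T_f}$ is symmetric as the closure of the symmetric operator $T_f$, so the Kato--Rellich theorem yields that $\overline{T_f}+\gamma f(N)^3$ is self-adjoint on $\mathrm{D}(f(N)^3)$. For step (iv), since $\ell^2_\mathrm{fin}(\mathbb N)\subset\mathrm{D}(f(N)^3)$, the operator $\overline{T_f}$ coincides with $T_f$ there and $[f(N),f(N)^3]=0$ on the same subspace; Corollary \ref{UTc} therefore gives $[f(N),\overline{T_f}+\gamma f(N)^3]=-i\one$ on the dense CCR-domain $(\one-L^\ast)\ell^2_\mathrm{fin}(\mathbb N)$. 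The only point that is not completely automatic is verifying that the relative smallness constant in the Kato--Rellich hypothesis can be made strictly less than $1$; this is exactly what the infinitesimal comparison of $f(N)^2$ with $f(N)^3$ in step (ii) provides, so no real obstacle remains.
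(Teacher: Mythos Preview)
Your argument is correct and is precisely the analogue of the proof of Theorem~\ref{sa_T3} (combined with the mechanism of Proposition~\ref{sa_T2}(2)) that the paper has in mind when it writes ``Similarly to Proposition~\ref{sa_T2} and Theorem~\ref{sa_T3} \dots\ We omit proofs.'' The only cosmetic difference is that you make the infinitesimal bound $\|f(N)^2\varphi\|\leq\epsilon\|f(N)^3\varphi\|+C_\epsilon\|\varphi\|$ explicit, whereas the paper leaves this step implicit in its reference to Proposition~\ref{sa_T2}.
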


\begin{example}
Let $ f(x)=x^\lambda+1$ for $\lambda\in(1/2,1)$. Then 
$T_{f^2}$ is bounded by Lemma \ref{bd_T}. 
Similar to Example \ref{3/4}, we can see that $f\in \mathcal M_\mathrm{s}(2)$. 
Therefore $ f(N)$ has an unbounded self-adjoint time operator with a dense CCR-domain by Theorem \ref{sa_T6}.
\end{example}

\subsection{Extensions}
In this section we consider the case that $f^2\in\mathcal K^-$ but not necessarily $f\in \mathcal K^-$. 
\begin{proposition}\label{fr}
Let $f\in \mathcal K$. 
If $\mathrm{D} (f(N)T_{f^2})\cap\mathrm{D} (f(N)^2)$ is dense and $T_{f^2}$ is bounded, 
then $ f(N)T_{f^2}+T_{f^2} f(N)+\gamma f(N)^2$ of $ f(N)$ has a self-adjoint extension for all $\gamma\geq1$. 
\end{proposition}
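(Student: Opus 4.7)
The plan is to recognise that, with $\gamma\geq 1$, the symmetric operator $S:=f(N)T_{f^2}+T_{f^2}f(N)+\gamma f(N)^2$ is bounded below on the dense domain $D:=\mathrm{D}(f(N)T_{f^2})\cap\mathrm{D}(f(N)^2)$, and then to invoke the Friedrichs extension theorem (which matches the proposition label ``fr''). Since $T_{f^2}$ is symmetric by construction (as in the discussion following Lemma \ref{dom_Tf}) and bounded by hypothesis, its continuous extension is a bounded self-adjoint operator on $\ell^2(\mathbb N)$. Because $D\subset\mathrm{D}(f(N))$, both $f(N)T_{f^2}\varphi$ and $T_{f^2}f(N)\varphi$ are defined for $\varphi\in D$, and a direct calculation of $\langle\varphi,S\psi\rangle$ for $\varphi,\psi\in D$---moving $f(N)$ across the inner product using its self-adjointness and then transposing $T_{f^2}$ in the same way---yields $\langle\varphi,S\psi\rangle=\langle S\varphi,\psi\rangle$, so that $S$ is symmetric on $D$.

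Next I would establish the semiboundedness using only $\|T_{f^2}\|<\infty$. For $\varphi\in D$,
\begin{align*}
\langle\varphi,S\varphi\rangle=2\mathrm{Re}\langle f(N)\varphi,T_{f^2}\varphi\rangle+\gamma\|f(N)\varphi\|^2,
\end{align*}
and an application of the Cauchy--Schwarz inequality combined with the elementary estimate $2ab\leq\gamma a^2+\gamma^{-1}b^2$ gives
\begin{align*}
\langle\varphi,S\varphi\rangle\geq-\gamma^{-1}\|T_{f^2}\|^2\|\varphi\|^2.
\end{align*}
Hence $S$ is semibounded from below; the restriction $\gamma\geq 1$ in the statement is a clean, sufficient form of this bound (the argument actually works for any $\gamma>0$).

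Finally, the Friedrichs extension theorem, applied to the densely defined, symmetric, semibounded operator $S$, delivers a self-adjoint extension, which is what is claimed.

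The argument presents no substantive obstacle; the only part requiring genuine care is the symmetry verification, because $f(N)$ is unbounded and one must remain inside the stipulated domains when shifting $f(N)$ and $T_{f^2}$ across the inner product. The hypothesis that $D$ is dense, together with the boundedness of $T_{f^2}$ and the inclusion $D\subset\mathrm{D}(f(N))$, encodes precisely the domain information needed to carry out those manipulations.
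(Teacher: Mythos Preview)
Your approach is correct and matches the paper's: both establish that the operator is symmetric and bounded below on the dense domain, then invoke the Friedrichs extension. The only cosmetic difference is that the paper obtains the lower bound via the operator inclusion
\[
f(N)T_{f^2}+T_{f^2}f(N)+\gamma f(N)^2\ \subset\ (f(N)+T_{f^2})^2+(\gamma-1)f(N)^2-T_{f^2}^2,
\]
which is precisely the completion-of-square reformulation of your Cauchy--Schwarz/Young estimate and explains the threshold $\gamma\geq 1$ appearing in the statement.
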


\begin{proof}
	From 
\begin{align*}
 f(N)T_{f^2}+T_{f^2} f(N)+\gamma f(N)^2\subset\left(f(N)+T_{f^2}\right)^2+(\gamma-1) f(N)^2-T_{f^2}^2,
\end{align*}
we see that $ f(N)T_{f^2}+T_{f^2} f(N)+\gamma f(N)^2$ is bounded from below. 
Thus it has the Friedrichs extension. 
\end{proof}
\begin{lemma}\label{dom_f2}
Let $f^2\in \mathcal K^-$. 
Then $(\one-L^\ast)\ell^2_\mathrm{fin}(\mathbb N)\subset\mathrm{D}\bigl(f(N)^2T_{f^2}\bigr)\cap\mathrm{D} (T_{f^2}f(N))$ and the operator $f(N)T_{f^2}+T_{f^2}f(N)$ is symmetric.
\end{lemma}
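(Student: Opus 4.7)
The plan is to split the lemma into three independent tasks: (i) the inclusion $(\one-L^\ast)\ell^2_\mathrm{fin}(\mathbb N)\subset\mathrm{D}(T_{f^2}f(N))$, (ii) the inclusion $(\one-L^\ast)\ell^2_\mathrm{fin}(\mathbb N)\subset\mathrm{D}(f(N)^2 T_{f^2})$, and (iii) symmetry of $f(N)T_{f^2}+T_{f^2}f(N)$. For (i), $f(N)$ preserves $\ell^2_\mathrm{fin}(\mathbb N)$, and since $f^2\in\mathcal K^-$ Lemma \ref{dom_Tf} applied to $f^2$ gives $\ell^2_\mathrm{fin}(\mathbb N)\subset\mathrm{D}(T_{f^2})$. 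Hence $T_{f^2}f(N)$ is defined on all of $\ell^2_\mathrm{fin}(\mathbb N)$, in particular on $(\one-L^\ast)\ell^2_\mathrm{fin}(\mathbb N)$.

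The main step is (ii), where by linearity it suffices to treat $\varphi_n=(\one-L^\ast)\xi_n=\xi_n-\xi_{n+1}$. Starting from
\begin{equation*}
T_{f^2}\xi_n = i\sum_{m\neq n}\frac{1}{f(m)^2-f(n)^2}\,\xi_m,
\end{equation*}
the telescoping produced by $(\one-L^\ast)$ yields, for $m\neq n, n+1$, the coefficient
\begin{equation*}
c_m = i\cdot\frac{f(n)^2-f(n+1)^2}{\bigl(f(m)^2-f(n)^2\bigr)\bigl(f(m)^2-f(n+1)^2\bigr)}.
\end{equation*}
The two exceptional coefficients at $m=n,n+1$ are finite and irrelevant for summability. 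For large $m$ the denominator grows like $f(m)^4$, so $|c_m|=O(f(m)^{-4})$ and $|f(m)^2 c_m|^2=O(f(m)^{-4})$. Since $f^2\in\mathcal K^-$ is precisely $1/f^2\in\ell^2(\mathbb N)$, one concludes $\sum_m f(m)^4|c_m|^2<\infty$, i.e.\ $T_{f^2}\varphi_n\in\mathrm{D}(f(N)^2)$.

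For (iii), recall that $T_{f^2}$ is symmetric since by Corollary \ref{UTc} it is a time operator of $f(N)^2$. Given $\varphi,\psi\in\mathrm{D}(f(N)T_{f^2})\cap\mathrm{D}(T_{f^2}f(N))$, one has $T_{f^2}\varphi, T_{f^2}\psi\in\mathrm{D}(f(N))$ and $f(N)\varphi, f(N)\psi\in\mathrm{D}(T_{f^2})$. Combining the self-adjointness of $f(N)$ with the symmetry of $T_{f^2}$ then gives
\begin{equation*}
\bigl(f(N)T_{f^2}\varphi,\psi\bigr)=\bigl(T_{f^2}\varphi, f(N)\psi\bigr)=\bigl(\varphi,T_{f^2}f(N)\psi\bigr),
\end{equation*}
and the analogous computation yields $(T_{f^2}f(N)\varphi,\psi)=(\varphi, f(N)T_{f^2}\psi)$; adding these produces the desired symmetry. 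The only genuine obstacle is the decay estimate in step (ii): the two extra powers of $f(m)$ in the denominator come entirely from the cancellation $\xi_n-\xi_{n+1}$, and it is precisely this telescoping that matches the hypothesis $1/f^2\in\ell^2(\mathbb N)$ and allows application of $f(N)^2$ rather than merely $f(N)$.
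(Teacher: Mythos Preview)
Your proof is correct, and the route you take in step (ii) differs from the paper's. The paper argues as follows: since $f^2\in\mathcal K^-$, Corollary~\ref{UTc} applied with $f^2$ in place of $f$ says that $T_{f^2}$ is a time operator of $f(N)^2$ with CCR-domain $(\one-L^\ast)\ell^2_\mathrm{fin}(\mathbb N)$; by the very definition of a CCR-domain this set is contained in $\mathrm{D}(f(N)^2 T_{f^2})\cap \mathrm{D}(T_{f^2}f(N)^2)$, and the inclusion in $\mathrm{D}(f(N)^2 T_{f^2})$ follows in one line. You instead compute the telescoped coefficients of $T_{f^2}(\xi_n-\xi_{n+1})$ explicitly and read off the $O(f(m)^{-4})$ decay to verify membership in $\mathrm{D}(f(N)^2)$ directly. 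Your argument is longer but self-contained and makes the mechanism visible---it is exactly this cancellation that sits behind the CCR-domain statement in Corollary~\ref{UTc}. The paper's route is shorter but relies on that earlier result. Note also that the paper's proof does not address the symmetry claim (iii) at all, treating it as routine; your explicit verification via self-adjointness of $f(N)$ and symmetry of $T_{f^2}$ is a clean way to fill that in.
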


\begin{proof} From Lemma \ref{T_p1}, $T_{f^2}$ satisfies
$\bigl[f(N)^2, T_{f^2}\bigr]=-i\one$ 
on $(\one-L^\ast)\ell^2_\mathrm{fin}(\mathbb N)$. 
This implies that $(\one-L^\ast)\ell^2_\mathrm{fin}(\mathbb N)\subset\mathrm{D}\bigl(f(N)^2T_{f^2}\bigr)\cap\mathrm{D}(T_{f^2}f(N))$.
\end{proof}

\begin{lemma}\label{N_inv}
Let $f\in \mathcal K$. 
Then $ f(N)(\one-L^\ast)\Delta_1(f,N)^{-1}(\one-L^\ast)\ell^2_\mathrm{fin}(\mathbb N)\subset(\one-L^\ast)\ell^2_\mathrm{fin}(\mathbb N)$.
\end{lemma}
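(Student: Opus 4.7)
The plan is to reduce the statement to an explicit computation in the basis $\{\xi_n\}$, using a clean characterization of the subspace $(\one-L^\ast)\ell^2_\mathrm{fin}(\mathbb N)$. The key observation I would establish first is that a vector $\psi\in\ell^2_\mathrm{fin}(\mathbb N)$ lies in $(\one-L^\ast)\ell^2_\mathrm{fin}(\mathbb N)$ if and only if $\sum_{k\in\mathbb N}\psi(k)=0$. The forward direction is a telescoping identity on coefficients: writing $\psi=(\one-L^\ast)\varphi$ with $\varphi=\sum c_k\xi_k$ gives $\psi(k)=c_k-c_{k-1}$ (with $c_{-1}:=0$), which sums to zero since $c_k$ vanishes for large $k$. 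The converse defines $c_k:=\sum_{j=0}^k\psi(j)$; this has finite support precisely because the total sum of $\psi$ is zero, and then $(\one-L^\ast)\sum c_k\xi_k=\psi$.

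Next, by linearity it suffices to check the inclusion on vectors of the form $\psi=(\one-L^\ast)\varphi$ with $\varphi\in\ell^2_\mathrm{fin}(\mathbb N)$. Set $\eta:=f(N)(\one-L^\ast)\Delta_1(f,N)^{-1}\psi$. Since $\Delta_1(f,N)^{-1}\xi_k=(f(k+1)-f(k))^{-1}\xi_k$, and $\one-L^\ast$ and $f(N)$ preserve finite support, $\eta\in\ell^2_\mathrm{fin}(\mathbb N)$ with coefficients
\begin{equation*}
\eta(k)=\frac{f(k)\psi(k)}{f(k+1)-f(k)}-\frac{f(k)\psi(k-1)}{f(k)-f(k-1)},\qquad\psi(-1):=0.
\end{equation*}

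Finally I would shift the index $k\mapsto k+1$ in the second piece and combine with the first piece to obtain
\begin{equation*}
\sum_{k}\eta(k)=\sum_{k}\frac{(f(k)-f(k+1))\psi(k)}{f(k+1)-f(k)}=-\sum_{k}\psi(k)=0,
\end{equation*}
where the last equality uses the characterization from the first paragraph applied to $\psi$. Hence $\eta$ is in $(\one-L^\ast)\ell^2_\mathrm{fin}(\mathbb N)$, which is exactly the desired inclusion.

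I do not expect a real obstacle here; the structure of the argument is dictated by the algebra, and the whole content is the clean telescoping cancellation $f(k)-f(k+1)=-(f(k+1)-f(k))$ that converts $\sum\eta(k)$ into $-\sum\psi(k)$. The only care required is bookkeeping the shifted index so that the telescope closes correctly, and checking that $(\one-L^\ast)\ell^2_\mathrm{fin}(\mathbb N)$ is genuinely identified by the vanishing-sum condition rather than by any stronger requirement.
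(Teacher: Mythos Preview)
Your argument is correct, but the route differs from the paper's. The paper works at the operator level: using the intertwining relation $f(N)L^\ast=L^\ast f(N+\one)$ and the identity $f(N+\one)\Delta_1(f,N)^{-1}=\one+f(N)\Delta_1(f,N)^{-1}$, it rewrites the whole operator as $(\one-L^\ast)\bigl(f(N)\Delta_1(f,N)^{-1}(\one-L^\ast)-L^\ast\bigr)$, which immediately exhibits the image as lying in $(\one-L^\ast)\ell^2_\mathrm{fin}(\mathbb N)$. You instead identify $(\one-L^\ast)\ell^2_\mathrm{fin}(\mathbb N)$ as the zero-sum subspace of $\ell^2_\mathrm{fin}(\mathbb N)$ and verify membership by a coefficient computation and an index shift. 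The paper's factorization is slightly stronger information, since it names an explicit preimage under $\one-L^\ast$, which could be reused if one needed to iterate the inclusion; your approach is more elementary and makes the mechanism transparent as a telescoping cancellation, and the zero-sum characterization of the CCR-domain is a useful standalone observation.
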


\begin{proof}
	On $\ell^2_\mathrm{fin}(\mathbb N)$, we have
\begin{align*}
&f(N)(\one-L^\ast)\Delta_1(f,N)^{-1}(\one-L^\ast)\\
&=\left(f(N)\Delta_1(f,N)^{-1}-L^\ast f(N+\one)\Delta_1(f,N)^{-1}\right)(\one-L^\ast)\\
&=\left(f(N)\Delta_1(f,N)^{-1}-L^\ast-L^\ast f(N)\Delta_1(f,N)^{-1}\right)(\one-L^\ast)\\
&=(\one-L^\ast)\left(f(N)\Delta_1(f,N)^{-1}(\one-L^\ast)-L^\ast\right).
\end{align*}
Therefore we obtain the desired result. 
\end{proof}

\begin{theorem}\label{T_f2}
Let $f^2\in \mathcal K^-$. 
Then $f(N)T_{f^2}+T_{f^2}f(N)$ and $T_f$ are time operators of $ f(N)$ with an infinite dimensional CCR-domain. 
\end{theorem}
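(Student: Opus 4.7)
The plan is to view $T_{f^2}$ as a time operator for $f(N)^2$: since $f^2 \in \mathcal K^-$, Corollary \ref{UTc} applied with $f^2$ in place of $f$ furnishes the CCR $[f(N)^2, T_{f^2}] = -i\one$ on $(\one-L^\ast)\ell^2_\mathrm{fin}(\mathbb N)$. The algebraic identity
\[
[f(N),\,f(N)T_{f^2}+T_{f^2}f(N)] = [f(N)^2, T_{f^2}]
\]
then indicates that $S := f(N)T_{f^2}+T_{f^2}f(N)$ should be a time operator of $f(N)$, and one expects $T_f$ to coincide with $S$ wherever the former is defined.

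As CCR-domain I would take
\[
\mathcal D := (\one-L^\ast)\Delta_1(f,N)^{-1}(\one-L^\ast)\ell^2_\mathrm{fin}(\mathbb N).
\]
By Lemma \ref{N_inv} both $\mathcal D$ and $f(N)\mathcal D$ lie inside $(\one-L^\ast)\ell^2_\mathrm{fin}(\mathbb N)$, so Lemma \ref{dom_f2} (together with Lemma \ref{dom_Tf} applied to $f^2$) makes $f(N)^2 T_{f^2}$, $T_{f^2}f(N)^2$, $f(N)T_{f^2}f(N)$ and $T_{f^2}f(N)$ all well-defined on $\mathcal D$. Expanding
\[
[f(N), S]\varphi = f(N)^2 T_{f^2}\varphi - T_{f^2}f(N)^2\varphi = [f(N)^2, T_{f^2}]\varphi = -i\varphi
\]
renders the CCR rigorous on $\mathcal D$, and symmetry of $S$ on $(\one-L^\ast)\ell^2_\mathrm{fin}(\mathbb N)$ is part of Lemma \ref{dom_f2}. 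Infinite-dimensionality of $\mathcal D$ follows from the fact that the vectors $(\one-L^\ast)\Delta_1(f,N)^{-1}(\one-L^\ast)\xi_n$, $n \in \mathbb N$, have lowest nonzero coordinate $1/\Delta_1(f,n)\neq 0$ at the pairwise distinct indices $n$, so they are linearly independent.

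The remaining task is to show that $T_f$ itself is defined on $\mathcal D$ and coincides with $S$ there. The pointwise identity used inside Lemma \ref{T_p1},
\[
T_{f,m}\varphi = f(N)T_{f^2,m}\varphi + T_{f^2,m}f(N)\varphi,
\]
is valid for every $\varphi \in \ell^2_\mathrm{fin}(\mathbb N)$, and the second term converges in $\ell^2$ to $T_{f^2}f(N)\varphi$ because $f(N)\varphi \in \ell^2_\mathrm{fin}(\mathbb N) \subset \mathrm{D}(T_{f^2})$. The main obstacle is the first term: when $f \notin \mathcal K^-$ one has $\xi_n \notin \mathrm{D}(f(N)T_{f^2})$ in general, so convergence of $f(N)T_{f^2,m}\varphi$ must come from the cancellation provided by $(\one-L^\ast)$. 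Taking $\varphi = \xi_n - \xi_{n+1}$ and evaluating $f(N)(T_{f^2,m_2}-T_{f^2,m_1})\varphi$ by reindexing the two series $T_{f^2, m}\xi_n$ and $T_{f^2, m}\xi_{n+1}$ against each other, the middle coefficients combine to
\[
\frac{f(n)^2 - f(n+1)^2}{\Delta_k(f^2, n)\,\Delta_{k-1}(f^2, n+1)},
\]
which is $O\!\left(1/f(n+k)^4\right)$; multiplying by $f(n+k)$ and summing squares bounds the tail by $\sum_{k>m_1} 1/f(n+k)^6$, which is finite because $f^2 \in \mathcal K^-$ forces $\sum 1/f^4 < \infty$ while $f\to\infty$. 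The two boundary contributions decay like $1/f(n+m_i)^2$ and vanish as $m_i \to \infty$, and linearity extends the Cauchy estimate to all of $(\one-L^\ast)\ell^2_\mathrm{fin}(\mathbb N)$.

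Once this is established, $T_{f,m}\varphi \to S\varphi$ in $\ell^2$ for every $\varphi \in (\one-L^\ast)\ell^2_\mathrm{fin}(\mathbb N)$, so $\mathcal D \subset \mathrm{D}(T_f)$ with $T_f|_{\mathcal D} = S|_{\mathcal D}$. Symmetry of $T_f$ on $\mathrm{D}(T_f)$ follows from the symmetry of each $T_{f,m}$ together with continuity of the inner product, and the CCR $[f(N), T_f]\varphi = -i\varphi$ on $\mathcal D$ transfers immediately from the identity already obtained for $S$. Hence $S$ and $T_f$ are both time operators of $f(N)$ sharing the infinite-dimensional CCR-domain $\mathcal D$.
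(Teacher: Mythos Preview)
Your approach is essentially the same as the paper's: the same CCR-domain $\mathcal D = (\one-L^\ast)\Delta_1(f,N)^{-1}(\one-L^\ast)\ell^2_\mathrm{fin}(\mathbb N)$, the same use of Lemmas \ref{dom_f2} and \ref{N_inv} to make $[f(N), S] = [f(N)^2, T_{f^2}]$ rigorous there, and the same identification $T_f = S$ on $(\one-L^\ast)\ell^2_\mathrm{fin}(\mathbb N)$. Where the paper simply invokes ``the proof of Lemma \ref{T_p1}'' for this last step, you correctly observe that Lemma \ref{D_fT} required $f\in\mathcal K^-$ and supply an explicit Cauchy estimate exploiting the $(\one-L^\ast)$-cancellation; together with your explicit verification that $\mathcal D$ is infinite-dimensional, this adds detail the paper leaves implicit.
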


\begin{proof} 
From Lemmas \ref{T_p1}, \ref{dom_f2} and \ref{N_inv}, we see that 
\begin{align*}
(\one-L^\ast)\Delta_1(f,N)^{-1}(\one-L^\ast)\ell^2_\mathrm{fin}(\mathbb N)
{}\subset\mathrm{D} (f(N)^2T_{f})\cap\mathrm{D} (f(N)T_{f} f(N))\cap\mathrm{D} (T_{f}f(N)^2).
\end{align*}
Therefore the symmetric operator $f(N)T_{f^2}+T_{f^2}f(N)$ satisfies
\begin{align*}
\left[f(N), f(N)T_{f^2}+T_{f^2}f(N)\right]=-i\one
\end{align*}
on $(\one-L^\ast)\Delta_1(f,N)^{-1}(\one-L^\ast)\ell^2_\mathrm{fin}(\mathbb N)$. 
By the proof of Lemma \ref{T_p1}, we can see that 
\begin{equation*}
	f(N)T_{f^2}+T_{f^2}f(N)=T_{f}
\end{equation*}
 on $(\one-L^\ast)\ell^2_\mathrm{fin}(\mathbb N)$. 
Hence $T_{f}$ is also a time operator of $f(N)$ with an infinite dimensional CCR-doamin. 
\end{proof}

%

\begin{example}
Let $ f(x)=\sqrt{x+1}$. 
Clearly, $f^2\in\mathcal K^-$. 
From Theorem \ref{T_f2}, we see that 
$f(N)T_{f^2}+T_{f^2} f(N)+ f(N)^2$ is a time operator of $ f(N)$. 
It is known that $T_{f^2}$ is bounded by \cite[Theorem 4.6]{AM08}. 
Since it has a self-adjoint extension by Proposition \ref{fr}, $ f(N)$ has a self-adjoint time operator with an infinite dimensional CCR-domain.
\end{example}

{\bf Acknowledgements:}
FH is financially supported by JSPS KAKENHI 20K20886 %
 and JSPS KAKENHI 20H01808.

\end{document}